\documentclass[a4paper, 12pt, oneside]{article}
\pdfoutput=1
\usepackage[a4paper, top=25.4mm, bottom=25.4mm, left=25.4mm, right=25.4mm]{geometry}
\setlength{\parskip}{10pt}
\setlength{\parindent}{0pt} 
\usepackage{setspace}
\singlespacing
\usepackage[T1]{fontenc}
\usepackage{ae,aecompl}
\usepackage[utf8]{inputenc}
\usepackage[english]{babel}
\usepackage[pdftex,  
            , pdfauthor={P. Coretto and F. Giordano},
            , pdftitle={NONPARAMETRIC ESTIMATION OF THE DYNAMIC RANGE OF MUSIC SIGNALS},
            , pdfsubject={Scientific Article},
            , pdfkeywords={Random subsampling, nonparametric regression, music data, dynamic range},
            , pdfproducer={},
            , pdfcreator={pdfTeX - Emacs (Linux) }
            , colorlinks=true, citecolor=blue, linkcolor=blue, urlcolor=blue
            , breaklinks=true
            , bookmarksnumbered=true,bookmarksopen=true,bookmarksopenlevel=1,
            , pdftex=true]{hyperref}
\usepackage{etoolbox}
\usepackage{xcolor}
\usepackage{url}
\usepackage{natbib} 
\usepackage{graphicx}
\DeclareGraphicsExtensions{.pdf,.png,.jpg, .eps}
\usepackage{caption}
\usepackage{multirow}
\usepackage{amssymb}
\usepackage{subfig}
\usepackage{amsmath, amsthm, amssymb, latexsym}
\usepackage{float}
\usepackage{booktabs}
\usepackage{datetime}
\theoremstyle{plain}

\newtheorem{proposition}{Proposition}
\newtheorem{lemma}{Lemma}
\newtheorem{corollary}{Corollary}
\newtheorem{assump}{}
\newenvironment{assumption}[2]{
  
  \begin{assump}[#2]}{\end{assump}}
\theoremstyle{definition}



\newcommand{\set}[1]{\left\{#1\right\}}
\newcommand{\tonde}[1]{ \left( #1 \right)}
\newcommand{\quadre}[1]{ \left[ #1 \right]}
\newcommand{\eps}{\varepsilon}%
\newcommand{\abs}[1]{\left| #1 \right| }

\newcommand{\uno}{\text{\bf \large  1}}
\newcommand{\Inf}{\infty}

\newcommand{\Z}{\mathbb{Z}}
\DeclareMathOperator{\ex}{E}   %
\DeclareMathOperator{\var}{Var}%
\newcommand{\conv}{\rightarrow}
\newcommand{\convd}{\stackrel{\text{d}}{\longrightarrow}}
\newcommand{\convp}{\stackrel{\text{p}}{\longrightarrow}}

\newcommand{\ProcEps}{ \{\varepsilon_t\}_{t \in \mathbb{Z}}}

\usepackage[ruled,vlined]{algorithm2e}

\SetKwInOut{Data}{data}
\SetKwInOut{Parameter}{parameter}
\SetKwInOut{Input}{input}
\SetKwInOut{Output}{output}

\usepackage{mdframed}
\newmdenv[
  backgroundcolor=gray!20,
  frametitle=,
  skipabove=\topsep,
  skipbelow=\topsep,
]{reminder}

\date{}
\title{\Large  NONPARAMETRIC ESTIMATION OF THE \\ DYNAMIC RANGE OF MUSIC SIGNALS%
\footnote{We thank Bob Katz and  Earl Vickers (both  members of the Audio Engineering Society) for the precious feedback on some of the idea contained in the paper. The authors gratefully acknowledge support from the University of Salerno grant program ``Finanziamento di attrezzature scientifiche e di supporto per i Dipartimenti e i Centri Interdipartimentali - prot. ASSA098434, 2009''.}}
\author{Pietro Coretto \\ Universit\`{a} di Salerno, Italy \\ pcoretto@unisa.it
\and %
Francesco Giordano \\ Universit\`{a} di  Salerno, Italy \\ giordano@unisa.it }


\begin{document}
\begin{reminder}
{\centering \color{red} \textbf{\textsf{This is a preprint. The revised version of this paper is published as}}\\}
\vspace{5pt}
P. Coretto and F. Giordano (2017). %
``Nonparametric estimation of the dynamic range of music signals''. %
\textit{Australian \& New Zealand Journal of Statistics}, Vol. 59(4), pp. 389--412 %
(\href{http://doi.wiley.com/10.1111/anzs.12217}{\sf download link}).
\end{reminder}

{\let\newpage\relax\maketitle}

\maketitle


\begin{quote}\small{
{\bf Abstract.~}\noindent  The dynamic range is an important parameter which measures the spread of sound power, and for music signals it is a measure of recording quality. There are various descriptive measures of sound power, none of which has strong statistical foundations. We start from a nonparametric model for sound waves where an additive stochastic term has the role to catch transient energy.  This component is recovered by a simple rate-optimal kernel estimator that requires a single data-driven tuning.  The distribution of its variance is approximated by a consistent random subsampling method that is able to cope with the massive size of the typical dataset.  Based on the latter, we propose a statistic, and an estimation method that is able to represent the dynamic range concept consistently. The behavior of the statistic is assessed based on a large numerical experiment where  we simulate dynamic compression on a selection of real music signals. Application of the method to real data also shows how the proposed method can predict subjective experts' opinions about the hifi quality of a recording.

\noindent {\bf Keywords:} random subsampling, nonparametric regression, music data, dynamic range.

\noindent {\bf AMS2010:} 97M80 (primary); 62M10, 62G08, 62G09, 60G35 (secondary)
}\end{quote}

\section{Introduction}\label{Section_Introduction}

Music signals have a fascinating complex  structure with interesting statistical
properties.  A music signal is the sum of periodic components plus transient
components that determine changes from one dynamic level to another. The term
``transients'' refers to  changes in acoustic energy. Transients  are of huge
interest. For technical reasons most recording and listening medium have to
somehow compress  acoustic energy variations, and this causes that  peaks are
strongly reduced with respect to  the average  level. The latter is also known
as ``dynamic range compression''. Compression of the dynamic range (DR)
increases the perceived loudness. The DR of a signal is the spread of acoustic
power. Loss of DR along the recording-to-playback chain translates into  a loss
of audio fidelity. While DR is a well  established technical concept, there is
no consensus on how to define it and how to measure it, at least in the field of
music signals. DR measurement has become a hot topic in the audio business.  In
2008 the release of the album ``Death Magnetic'' (by Metallica),  attracted
medias' attention for its extreme and aggressive loud sounding approach caused
by massive DR compression. DR manipulations are not reversible, once applied the
original dynamic is lost forever \citep[see][and references therein]{katz-2007,
  vickers-2010}. Furthermore, there is now  consensus that there is a strong
correlation between the DR and the recording quality perceived by the
listener. Practitioners in the audio industry use to measure the  DR  based  on
various descriptive  statistics for which little is known in terms of their
statistical  properties \citep{boley-etal-2010, ballou-2005}.

The aim of this paper is twofold: (i) define a DR statistic that is able to
characterize the dynamic of a music signal, and to detect DR compression
effectively, (ii) build a procedure to estimate DR with proven  statistical
properties.  In signal processing a  ``dynamic compressor'' is a device that
reduces  the peakness of the sound energy. The idea here is that dynamic
structure of a music signal is characterized by the energy produced by transient
dynamic, so that  the DR is measured by looking at the distribution of transient
power. We propose a nonparametric model composed by two elements: (i)  a smooth
regression function mainly accounting for long term harmonic components; (ii) a
stochastic component representing transients. In this framework transient power
is given by the variance of the stochastic component. By consistently estimating
the distribution of the variance of the stochastic component, we obtain  the
distribution of its power  which, in turn, is the  basis for constructing our DR
statistic. The DR,  as well as other background concepts are given in Sections
\ref{sec:tech} and \ref{sec:Statistical_Modelling_and_Estimation}.  

This paper gives four contributions. The idea of decomposing the music signal
into a deterministic function of time plus a stochastic component is due to the
work of \cite{Serra_Smith_1990}. However, it is usually assumed that stochastic
term of this decomposition is white noise. While this is appropriate in some
situations, in general the white noise assumption is too restrictive.  The first
novelty in this paper is that we propose a decomposition where the stochastic
term   is an  $\alpha$-mixing process, and this  assumption allows to
accommodate transient structures beyond those allowed by linear processes. The
stochastic component is obtained by filtering out the smooth component of the
signal, and  this is approximated with a simple kernel estimator inspired to
\cite{altman-1990}. The second contribution of this work is that we develop upon
Altman's seminal paper obtaining a rate optimal kernel estimator without
assuming linearity and knowledge of the correlation structure of the stochastic
term.  An important advantage of the proposed smoothing is that only one
data-driven tuning is needed (see  Assumption \ref{ass:M} and Proposition
\ref{prop:1}), while existing methods require two tunings to be fixed by the
user \citep[e.g. ][]{Hall-a-etal-1995}.  Approximation of the distribution of
the variance of the stochastic component of the signal is done by a subsampling
scheme inspired to that developed by \cite{politis-romano-1994} and
\cite{politis-romano-wolf-2001}. However, the standard  subsampling requires to
compute the variance of the stochastic component on the entire sample, which in
turn means that we need to compute the kernel estimate of the smooth component
over  the entire sample. The latter is unfeasible given the astronomically large
nature of the typical sample size. Hence, a third contribution of the paper is
that  we propose a consistent random subsampling scheme that does only require
computations at  subsample level.   The smoothing and the subsampling are
discussed in Section \ref{sec:estimation}.  A further contribution of the paper
is that  we  propose a DR statistic based on the quantiles of the variance
distribution of the stochastic component. The smoothing--subsampling previously
described is used to obtain estimates of such a statistic. The performance of
the DR statistic is assessed in a  simulation study where we use real data to
produce simulated levels of compression. Various combinations of compression
parameters are considered. We show that the proposed method is quite accurate in
capturing the DR concept. DR is considered as a measure of hifi quality, and
based on a real dataset we show how the estimated DR measure  emulates
comparative subjective judgements about hifi quality given by experts. All this
is treated in Sections \ref{sec:empirical_evidence_montecarlo} and
\ref{sec_real_dataset}.  Conclusions and final remarks are given in Section
\ref{sec:concl-final-remarks}. All proofs of statements are given in the final
Appendix.

\section{Background concepts: sound waves, power and dynamic}\label{sec:tech}

Let $x(t)$ be a continuous time waveform taking values on the time interval $T_1
\leq t \leq T_2,$ such that $\int_{T_1}^{T_2} x(t) dt=0$. Its power is given by 
\begin{equation}\label{eq:power}
P^{RMS}=C\sqrt{\frac{1}{T_2 - T_1} \int_{T_1}^{T_2} x(t)^2 dt}
\end{equation}
where $C$ is an appropriate scaling constant that depends on the measurement
unit. \eqref{eq:power} defines the so-called root mean square (RMS) power.  It
tells us that the power expressed by a waveform is determined by the average
magnitude of the wave swings around its average level. In other words the
equation \eqref{eq:power} reminds us of the concept of standard deviation.  
A sound wave  $x(t)$ can be recorded and stored by means of analog and/or
digital processes.  In the digital world $x(t)$ is represented numerically by
sampling and quantizing the analog version of $x(t)$.  The sampling scheme
underlying the so-called Compact Disc Digital Audio, is called Pulse Code
Modulation (PCM).  In PCM sampling a voltage signal $x(t)$ is sampled as a
sequence of integer values proportional to the level of $x(t)$ at equally spaced
times $t_0, t_1,\ldots$~.  The CDDA is based on PCM with sampling frequency
equal to 44.1KHz, and 16bits precision. The quantization process introduces
rounding errors also known as quantization noise. Based on the PCM samples
$\{x_0,x_1, \ldots x_T\}$, and under strong conditions on the structure of the
underlying $x(t)$, the RMS power can be approximated  by 
\begin{equation}\label{eq:pcm_power}
P^{RMS}_T = \sqrt{\frac{1}{T+1} \sum_{t=0}^{T} x_t^2}.
\end{equation}
The latter is equal to sampling variance, because this signals have zero
mean. Power encoded in a PCM stream is expressed as full-scale decibels:  
\begin{equation}\label{eq:pcm_dbfs}
\text{dBFS} = 20\log_{10}\frac{P^{RMS}_T}{P_0},
\end{equation}
where $P_0$ is the RMS power of a reference wave. Usually  $P_0 = 1/\sqrt{2}$,
that is, the RMS power of a pure sine-wave, or  $P_0=1$ that is, the RMS power of
pure square-wave. For simplicity we set $P_0=1$ in this paper. dBFS is commonly
considered as DR measure because it measures the spread between sound power and
power of a reference signal. 

For most real-world signal power changes strongly over time. In Figure
\ref{fig:intheflash2039} we report a piece of sound extracted from the left
channel of the song ``In the Flesh?'' by Pink Floyd. The song starts with a soft
sweet lullaby corresponding to Block 1 magnified in the bottom plot. However, at
circa 20.39s the band abruptly starts  a sequence of blasting riffs. This teaches 
us that: (i) sound power of  music signals can change tremendously over time;
(ii)  the power  depends on $T$, that is the time horizon.  In the audio
engineering community the practical approach is to time-window the signal and
compute average power across windows,  then several forms of DR statistics are
computed \cite[see][]{ballou-2005} based on dBFS. In practice one chooses a $T$,
then splits the PCM sequence into blocks of $T$ samples allowing a certain
number $n_o$ of overlapping samples between blocks, let $\overline{P^{RMS}}_T$
be the average of $P_T^{RMS}$ values computed on each block, finally a simple DR
measure, that we call ``sequential DR'',  is computed as 
\begin{equation}\label{eq:AvgDR}
\text{DRs} = - 20 \log_{10} \frac{\overline{P^{RMS}}_T}{x_{\text{peak}}},
\end{equation}
where $x_{\text{peak}}$ is the peak sample. The role of  $x_{\text{peak}}$ is to
scale the DR measure so that it does not depend on the quantization
range. DRs=10  means that on average the RMS power is 10dBFS below the maximum
signal amplitude. Notice that 3dBFS increment translates into twice the power.
DRs numbers are easily interpretable, however, the statistical foundation is
weak. Since the blocks are sequential, $T$ and  $n_o$ determine the blocks
uniquely  regardless the structure of the signal at hand. The second issue is
whether the average  $\overline{P^{RMS}}_T$ is a good  summary of the power
distribution in order to express the DR concept. Certainly the descriptive
nature of the DRs  statistic, and the lack of a stochastic framework, does not
allow to make inference and judge numbers consistently. 
\begin{figure}[!t]
\centering 
\includegraphics[width=\textwidth]{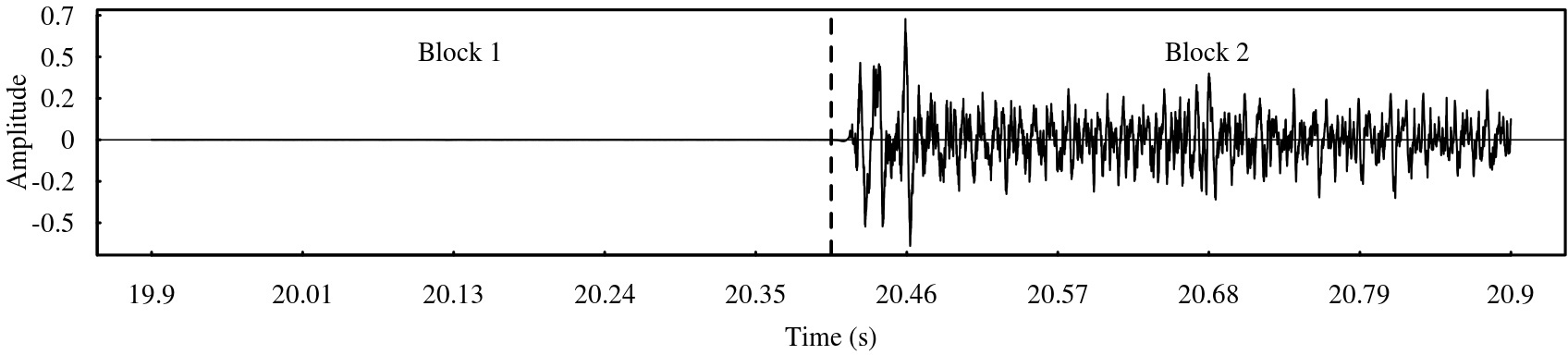}\\
\vspace{5mm}
\includegraphics[width=\textwidth]{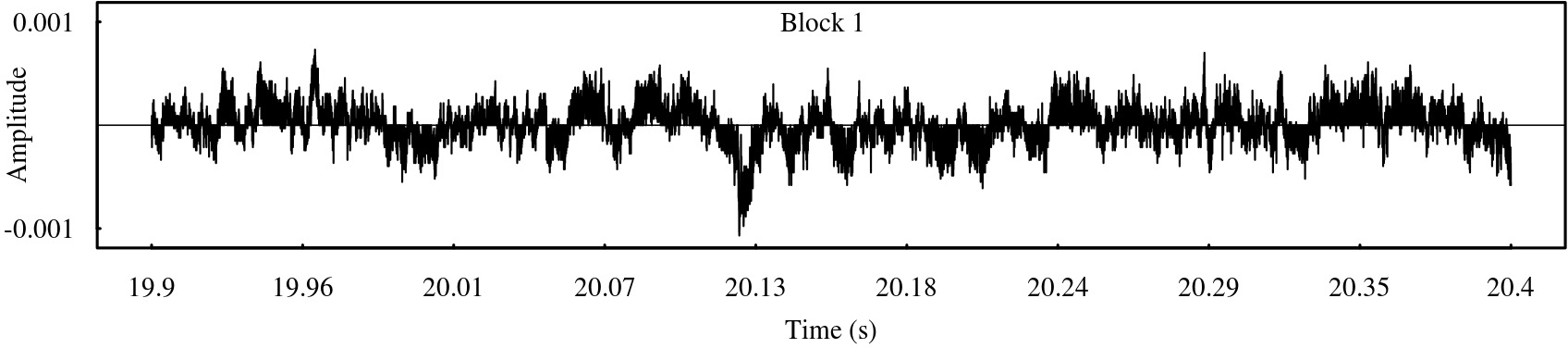}
\caption{Waveform extracted from the  left channel of the song ``{\sl In the Flesh?}'' 
from ``{\sl The Wall''} album  by Pink Floyd (Mobile Fidelity Sound Lab remaster, 
catalog no. UDCD 2-537). Top plot  captures 980ms of music centered at 
time 20.39524s (vertical dashed line).  Bottom plot magnifies  Block 1.}%
\label{fig:intheflash2039}
\end{figure}


\section{Statistical properties of sound waves and modelling}\label{sec:Statistical_Modelling_and_Estimation}

The German theoretical physicist Herman Von Helmholtz
(\citeyear{vonhelmholtz-1885})  discovered that within small time intervals
sound signals produced by instruments are periodic and hence representable  as
sums of periodic functions of time also known as ``harmonic components''. The
latter implies a discrete power spectrum. \cite{risset-mathews-1969} discovered
that the intensity of the harmonic components varied strongly over time even for
short time lengths. \cite{Serra_Smith_1990} proposed to model sounds from single
instruments by a sum of sinusoids plus a white noise process. While the latter
can model simple signals, e.g. a flute playing a single tune, in general such a
model is too simple to represent more complex sounds. 

\begin{figure}[t]
\begin{center}
\includegraphics[width=0.9\textwidth]{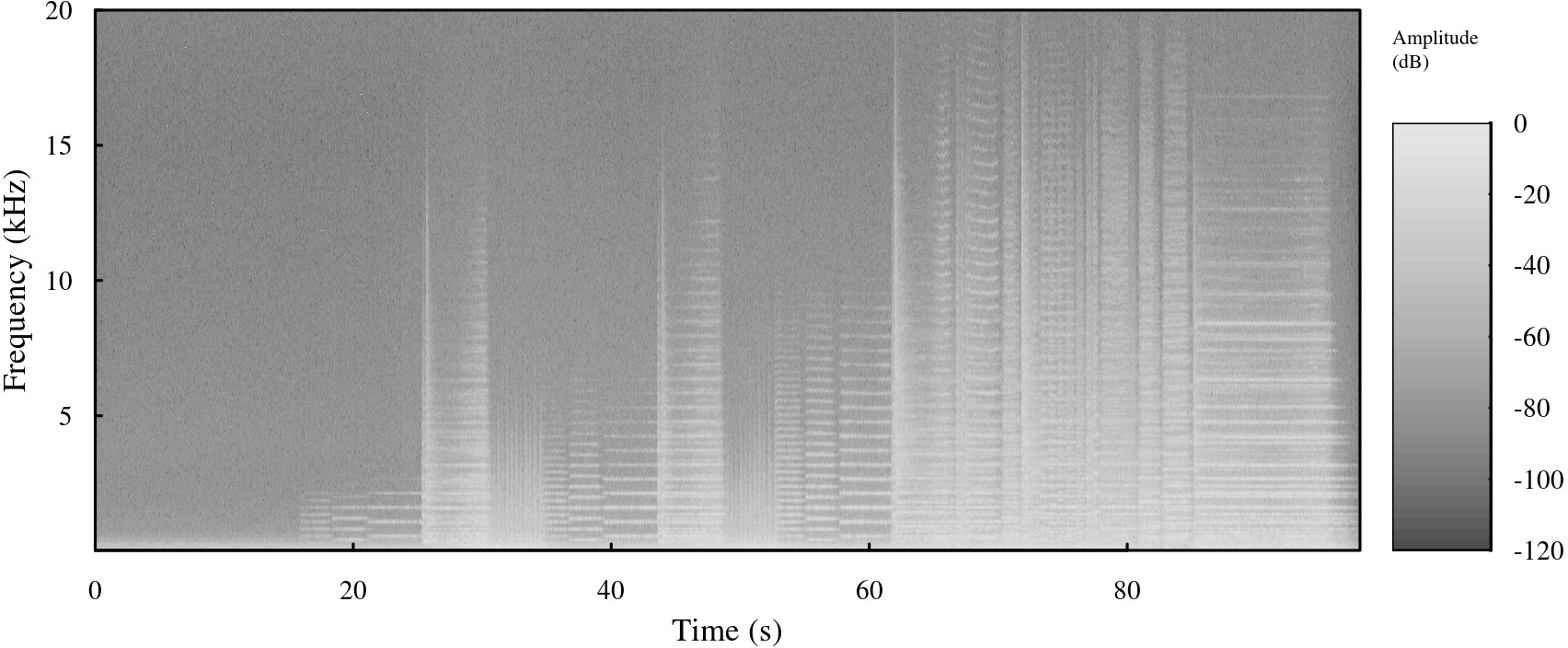}
\caption{Spectrogram of the right channel of the  opening fanfare  of ``Also
  Sprach Zarathustra'' (Op. 30) by Richard Strauss, performed by Vienna
  Philharmonic Orchestra conducted by Herbert von Karajan (Decca, 1959).  The
  power spectra values are expressed in dBFS scale and coded as colors ranging
  from black (low energy) to white (high energy).}

\label{fig:zarathustra}
\end{center}
\end{figure}
Figure \ref{fig:zarathustra} reports the spectrogram of a famous  fanfare
expressed in dBFS.  This is a particularly dynamic piece of sound. The orchestra
plays a soft opening followed by a series of transients  at full blast with
varying decay-time. There are several changes in the spectral distribution. At
particular time points there are peaks localized in several frequency bands, but
there is a continuum of energy spread between peaks. This shows how  major
variations are  characterized by a strong continuous component in the spectrum
that is also time-varying. In their  pioneering works \cite{Voss_Clarke_1975,
  Voss_Clarke_1978} found evidence that, at least for some musical instruments,
once the recorded  signal is passed trough band-pass filter with cutoffs set at
100Hz and 10KHz, the signal within the bandpass has a spectrum  that resembles
$1/f$--noise or similar fractal processes.  But the empirical evidence is based
on spectral methods acting as if the processes involved were stationary, whereas
this is often not true. Moreover, while most acoustic instruments produce most
of their energy between 100Hz and 10KHz, this is not true if one considers
complex ensembles. It is well known that for group of instruments playing
together, on average 50\% of the energy is produced in the range [20Hz,
300Hz]. Whether or not the $1/f$--noise hypothesis is true is yet to be
demonstrated, in this paper we give examples that show that the $1/f$--noise
hypothesis does not generally hold.

These observations are essential to motivate the following model for the PCM
samples.  Let $\{Y_t\}_{t \in\Z}$ be a sequence such that
\begin{equation}\label{eq:Yt}
Y_t = s(t)+\varepsilon_t,
\end{equation}
under the following assumptions:
\begin{assumption}{\textbf{A1}}{}\label{ass:s}
The function $s(t)$ has a continuous second derivative. 
\end{assumption}
\begin{assumption}{\textbf{A2}}{}\label{ass:eps}
$\{\eps_t\}_{t \in \Z}$ is a strictly stationary and $\alpha$-mixing process
with mixing coefficients $\alpha(k)$, $\ex[\eps_t]=0$,  $\ex
\abs{\eps_t^2}^{2+\delta} < +\Inf$, and  $\sum_{k=1}^{+\Inf}
\alpha^{\delta/(2+\delta)}(k)<\Inf$ for some $\delta>0$.

\end{assumption}
(\ref{eq:Yt}) is by no means interpretable as a Tukey-kind  signal plus noise
decomposition. The observable (recorded) sound wave $Y_t$ deviates from $s(t)$
because of several factors:  (i) transient changes in acoustic energy; (ii)
several sources of noise injected in the recording path; (iii) non-harmonic
components. We call the process $\{\varepsilon_t\}_{t \in \Z}$ the ``{\sl
  stochastic sound wave}'' (SSW). The main difference with
\cite{Serra_Smith_1990} is that in their work $\varepsilon_t$ is a white noise,
and $s(t)$ is a sum of sinusoids. \cite{Serra_Smith_1990} are mainly interested
in the spectral structure, hence they use $s(t)$ to study the discrete part of
the spectrum. Moreover they are interested in simple sounds from single
instruments, hence they simply assume that $\varepsilon_t$ is a white
noise. Their assumptions are reasonable for simple sounds, but in general do not
hold for complex sounds from an ensemble of instruments.

\ref{ass:s} imposes a certain degree of smoothness for $s(\cdot)$. This is
because we want that the stochastic term absorbs transients while 
$s(\cdot)$ mainly models long-term  periodic (harmonic) components.   The
$\alpha$-mixing assumption allows to manage non linearity, departure from
Gaussianity, and  a certain slowness in the  decay of the dependence structure
of the SSW. Certainly the $\alpha$-mixing assumption  \ref{ass:eps} would not be
consistent with the long-memory nature of $1/f$--noise. However, the
$1/f$--noise features disappear once the long term harmonic components are
caught by $s(\cdot)$ (see discussion on the example of
Figure~\ref{fig_intheflash_residuals}).  Whereas \ref{ass:eps}  allows for
various stochastic structures, the restrictions on the moments and mixing
coefficients are needed for technical reasons. Nevertheless, the existence of
the fourth moment is not that strong in practice, because this would imply that
the SSW has finite power variations, which is something that has to hold
otherwise it would be impossible to record it.\\


\section{Estimation}\label{sec:estimation} 
The DR statistic proposed in this paper is estimated  based on the following subsampling algorithm. \\
\begin{algorithm}[!h]
\caption{blockwise smoothing}\label{algo}
\SetAlgoLined
\Input{PCM data, $b \in N, K \in N$}
\Output{DR statistic}
\BlankLine
Draw a random sample $\set{I_1, I_2, \ldots, I_K}$  from the set  $\set{1,2,\ldots, n-b+1};$\\
\For{$i=1,2,\ldots,K$}{
{optimally estimate $s(\cdot)$ on the subsample $Y^{\text{sub}}_i=\set{y_{I_i}, y_{I_{i}+1}, \ldots, y_{I_{i}+b-1}},$}\\
{compute the sampling variance of $\hat \eps$ over $Y^{\text{sub}}_i$}
}
Construct the empirical distribution of the variances of $\hat \eps$.\\
Compute the DR statistic based on empirical distribution of the variances of $\hat \eps.$
\end{algorithm}

In analogy with equation \eqref{eq:pcm_power}, the RMS power of the SSW is given by
the sampling standard deviation of $\ProcEps$. The main goal is to obtain an
estimate of the distribution of the variance of $\{\eps_t\}_{t \in \Z}$. 
Application of existing  methods would require nonparametric  estimation
of  $s(\cdot)$ on the entire sample. However, the sample size is typically of
the order of millions of observations. Moreover, since the smooth  component is
time-varying, one would  estimate $s(\cdot)$ by using kernel methods with local
window. It is clear that all this is computationally  unfeasible.  
Compared with the standard subsampling, the ``blockwise smoothing'' Algorithm 
gives  clear advantages: (i) randomization reduces the otherwise impossible 
large number of subsamples to be explored;  (ii) none of the computations is 
performed on the entire sample. In particular estimation of $s(\cdot)$ is performed
subsample-wise as in Proposition \ref{prop:2} and Corollary \ref{cor:1}; (iii)
estimation of $s(\cdot)$ on smaller blocks of observations allows to adopt a
global, rather than a local, bandwidth  approach. Points (ii) and (iii) are
crucial for the feasibility of the computing load.  The smoothing and the random
subsampling part of the procedure are disentangled in the next two Sections.

\subsection{Smoothing}\label{sec_smoothing}
This section treats the smoothing with respect to the entire sample. The theory
developed into this section is functional to the development of the
\textit{local} estimation of $s(\cdot)$ at subsample level. The latter will be
treated in Section \ref{sec_random_subsampling}. First notice that  without loss
of generality we can always rescale $t$ onto $(0,1)$ with equally spaced
values. Therefore, model (\ref{eq:Yt}) can be written as 
\begin{equation}
\label{eq:Ytbis}
Y_i = s(i/n)+\varepsilon_i, \qquad i=1,\ldots,n.
\end{equation}
Estimation of $s(t)$, $t\in(0,1)$, is performed based on the classical Priestley-Chao kernel estimator \citep[][]{priestley-chao-1972}
\begin{equation}\label{eq:ker}
\hat{s}(t) = \frac{1}{nh} \sum_{i=1}^n \mathcal{K}\left(\frac{t-i/n}{h}\right)y_i,
\end{equation}
under  the assumption 
\begin{assumption}{\textbf{A3}}{} \label{ass:K}
$\mathcal{K}(\cdot)$ is a density function symmetric about zero with compact
support. Moreover, $\mathcal{K}(\cdot)$ is Lipschitz continuous of some
order. The bandwidth $h \in H=[c_1n^{-1/5};c_2n^{-1/5}]$, where $c_1$ and $c_2$ are two positive constants
such that $c_1$ is arbitrarily small while $c_2$ is arbitrarily large.  
\end{assumption}
Without loss of generality, we will  use the Epanechnikov kernel for its well
known efficiency properties, but any other kernel function fulfilling
\ref{ass:K} is welcome.  
\cite{altman-1990} studied the kernel regression problem  when the error term
additive to the regression function exhibits serial correlation.  Furthermore in
the setup considered by \cite{altman-1990} the error term is a linear
process. The paper showed that when the stochastic term exhibits serial
correlation, standard bandwidth optimality theory no longer applies. The author
proposed an optimal bandwidth estimation which is based on a correction factor
that assumes that  the autocorrelation function  is known. Therefore Altman's
theory does not apply here for two reasons: (i) in this paper the
$\set{\eps_t}_{t \in \Z}$  is not restricted to the class of linear processes;
(ii) we do not assume that serial correlations are known. Let  
$\hat{\eps}_i = y_i - \hat{s}(i/n)$, and let us define the cross-validation function
\begin{equation}\label{eq:CV}
\text{CV}(h)= %
\quadre{1-\frac{1}{nh}\sum_{j=-M}^M \mathcal{K}\left(\frac{j}{nh}\right)\hat{\rho}(j)} ^{-2} %
\frac{1}{n} \sum_{i=1}^n \hat{\eps}_i^2  ;%
\end{equation}
where the first term  is the  correction factor \`{a} la Altman with the
difference that  it  depends on the estimated autocorrelations of $\ProcEps$ up
to $M$th order. We show that the modification does not affect consistency at the
optimal rate. The number of lags into the correction factor depends both on $n$
and $h$. Intuitively consistency of the bandwidth selector can only be achieved
if $M$ increases at a rate smaller than $nh$, and in fact we will need the
following technical requirement: 
\begin{assumption}{\textbf{A4}}{}\label{ass:M}
Whenever $n\to \Inf$; then $M \to \infty$ and $M=O(\sqrt{nh})$.
\end{assumption}
The previous condition makes clear the relative order of the two smoothing
parameters $M$ and $h$. The bandwidth is estimated by minimizing the
cross-validation function, that is  
\begin{equation*}
\hat h = {\text{argmin}}_{h \in H} \; \text{CV}(h).
\end{equation*}
Since $t$ is deterministic and equally spaced in $(0,1)$ and, using the approach as in \cite{altman-1990}, we can write the Mean Square Error (MSE) of $\hat s(t)$ as
\[
\mbox{MSE}(h;\hat s(t))=\frac{B_{\mathcal{K}}^2}{4}h^4[s''(t)]^2+\frac{V_{\mathcal{K}}}{nh}\sigma_{\varepsilon}^2(1+2S_{\rho}),
\]
where $s''(\cdot)$ is the second derivative of $s(\cdot)$,
$\sigma_{\varepsilon}^2=E(\varepsilon_t^2)$,
$S_{\rho}=\sum_{j=1}^{\infty}\rho(j)$, $B_{\mathcal{K}}=\int
u^2\mathcal{K}(u)du$ and $V_{\mathcal{K}}=\int\mathcal{K}^2(u)du$. Let
$\text{MISE}(h; \hat{s})$ be the Mean Integrated Square Error of
$\hat{s}(\cdot)$, that is  
\[
\text{MISE}(h;\hat s)=\int_0^1 \mbox{MSE}(h;\hat s(t))dt=\frac{B_{\mathcal{K}}^2}{4}h^4\int_0^1[s''(t)]^2dt+\frac{V_{\mathcal{K}}}{nh}\sigma_{\varepsilon}^2(1+2S_{\rho})
\]
and let $h^\star$ be the global minimizer of $\text{MISE}(h;
\hat{s})$. 
\begin{proposition}\label{prop:1}
Assume {\ref{ass:s}}, {\ref{ass:eps}}, {\ref{ass:K}} and {\ref{ass:M}}.
$\hat{h}/{h^\star} \convp 1$ as $n \to \Inf$. 
\end{proposition}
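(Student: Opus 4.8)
The plan is to reduce the claim to a standard argmin--consistency argument. Reparametrise $h=gn^{-1/5}$ with $g\in[c_1,c_2]$ (admissible by \ref{ass:K}); minimising $\mathrm{CV}(h)$ over $h\in H$ is the same as minimising $n^{4/5}\big(\mathrm{CV}(gn^{-1/5})-\sigma^2\big)$ over $g\in[c_1,c_2]$, where $\sigma^2=\var(\eps_0)$ does not depend on $h$. Writing $\mu_2(\mathcal K)=\int u^2\mathcal K(u)\,du$ and $\gamma(k)=\cov(\eps_0,\eps_k)$ (so $\sum_k\abs{\gamma(k)}<\infty$ by the covariance inequality for $\alpha$-mixing sequences together with \ref{ass:eps}), it then suffices to prove that, uniformly in $g\in[c_1,c_2]$,
\[
n^{4/5}\big(\mathrm{CV}(gn^{-1/5})-\sigma^2\big)\ \convp\ \phi(g):=\tfrac14\,\mu_2(\mathcal K)^2\Big(\textstyle\int_0^1 s''(t)^2\,dt\Big)g^4+\|\mathcal K\|_2^2\Big(\textstyle\sum_k\gamma(k)\Big)g^{-1}.
\]
Indeed $\mathrm{MISE}(h;\hat s)=n^{-4/5}\phi(g)(1+o(1))$ is exactly the Priestley--Chao expansion with the error variance inflated by the short-range-dependence factor $\sum_k\gamma(k)/\gamma(0)$; $\phi$ is strictly convex with a unique minimiser $g^\star$, interior to $[c_1,c_2]$ since $c_1$ is arbitrarily small and $c_2$ arbitrarily large. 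Uniform convergence of the criteria to a limit with a unique interior minimiser then gives $\hat g=\text{argmin}_{[c_1,c_2]}\mathrm{CV}\convp g^\star$, i.e. $\hat h/h^\star=\hat g/g^\star\convp1$.

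\emph{Step 1: expand the residual sum of squares.} Put $W_{ij}=(nh)^{-1}\mathcal K\big((i-j)/(nh)\big)$, $\hat b_i=\sum_j W_{ij}\,s(j/n)-s(i/n)$ and $v_i=\sum_j W_{ij}\eps_j$, so that $\hat\eps_i=\eps_i-\hat b_i-v_i$. By \ref{ass:s} and a Taylor expansion, $\tfrac1n\sum_i\hat b_i^2=\tfrac14\mu_2(\mathcal K)^2h^4\int_0^1 s''(t)^2dt+o(h^4)$ (interior points; the boundary layer, of relative measure $O(h)$, is of smaller order). The law of large numbers for $\alpha$-mixing sequences gives $\tfrac1n\sum_i\eps_i^2=\sigma^2+O_P(n^{-1/2})$. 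Using $\sum_k\abs{\gamma(k)}<\infty$, $nh\to\infty$ and continuity of $\mathcal K$ at $0$, $\ex\big[\tfrac1n\sum_iv_i^2\big]=(nh)^{-1}\|\mathcal K\|_2^2\sum_k\gamma(k)(1+o(1))$ and $\ex\big[\tfrac2n\sum_i\eps_iv_i\big]=(nh)^{-1}2\mathcal K(0)\sum_k\gamma(k)(1+o(1))$; these two quadratic forms deviate from their means only by $o_P(n^{-4/5})$ (their variances are $O(1/(n^2h))$ by Rosenthal-type bounds for mixing sequences), while the three mean-zero cross terms involving $\hat b_i$ have variance $O(h^4/n)$, hence are $o_P(n^{-4/5})$. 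Collecting, uniformly in $h\in H$,
\[
\tfrac1n\textstyle\sum_i\hat\eps_i^2=\sigma^2+\tfrac14\mu_2(\mathcal K)^2h^4\!\int_0^1\! s''(t)^2dt+\frac{\big(\|\mathcal K\|_2^2-2\mathcal K(0)\big)\sum_k\gamma(k)}{nh}+o_P(n^{-4/5}),
\]
the term $-2\mathcal K(0)(nh)^{-1}\sum_k\gamma(k)$ being precisely the downward bias of the naive residual variance that the Altman factor is designed to remove.

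\emph{Step 2: the correction factor, then conclude.} Write $c(h)=1-(nh)^{-1}\sum_{j=-M}^{M}\mathcal K(j/(nh))\hat\rho(j)$ and split $\hat\rho(j)=\rho(j)+(\hat\rho(j)-\rho(j))$. Since $\mathcal K(j/(nh))=\mathcal K(0)+O(\abs j/(nh))$ on $\abs j\le M$, $\sum_k\abs{\rho(k)}<\infty$, $\sum_{\abs j>M}\abs{\rho(j)}\to0$ and $M=O(\sqrt{nh})$ (\ref{ass:M}), the true-$\rho$ part equals $\mathcal K(0)(nh)^{-1}\sum_k\rho(k)+o(n^{-4/5})$. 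For the plug-in error, $(nh)^{-1}\sum_{\abs j\le M}\abs{\mathcal K(j/(nh))}\,\abs{\hat\rho(j)-\rho(j)}\le\mathcal K_{\max}\,(2M+1)(nh)^{-1}\max_{\abs j\le M}\abs{\hat\rho(j)-\rho(j)}$, where $(2M+1)(nh)^{-1}=O((nh)^{-1/2})$ and $\max_{\abs j\le M}\abs{\hat\rho(j)-\rho(j)}=o_P((nh)^{-1/2})$: the sample autocorrelations of an $\alpha$-mixing sequence obey a uniform rate $O_P(\sqrt{\log M/n})$, and replacing $\eps_i$ by $\hat\eps_i=\eps_i-\hat b_i-v_i$ perturbs the lagged products by amounts that, after averaging in $i$, are $O_P((nh)^{-1})$ uniformly in $\abs j\le M$, both being $o_P((nh)^{-1/2})$ because $\sqrt{\log M/n}=o(n^{-2/5})$. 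Hence $c(h)^{-2}=1+2\mathcal K(0)(nh)^{-1}\sum_k\rho(k)+o_P(n^{-4/5})$ uniformly in $h\in H$. Multiplying this by the expansion of Step 1, the product $2\mathcal K(0)(nh)^{-1}\sum_k\rho(k)\cdot\sigma^2=2\mathcal K(0)(nh)^{-1}\sum_k\gamma(k)$ cancels the matching term of Step 1, all remaining products are $o_P(n^{-4/5})$, and one obtains $\mathrm{CV}(h)=\sigma^2+\tfrac14\mu_2(\mathcal K)^2h^4\int_0^1 s''(t)^2dt+\|\mathcal K\|_2^2(nh)^{-1}\sum_k\gamma(k)+o_P(n^{-4/5})=\sigma^2+\mathrm{MISE}(h;\hat s)+o_P(n^{-4/5})$, i.e. the uniform convergence claimed in the first paragraph; the argmin argument there finishes the proof.

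I expect the main obstacle to be the uniform-in-$h$ control of the stochastic remainders above: the deviations of the quadratic forms $\tfrac1n\sum v_i^2$ and $\tfrac1n\sum\eps_iv_i$ from their means, and the uniform-over-$\abs j\le M$ behaviour of the residual-based $\hat\rho(j)$. The routine device handles it: discretise $H$ on a polynomially fine grid, use the Lipschitz continuity of $\mathcal K$ from \ref{ass:K} (hence of $h\mapsto W_{ij}(h)$) to bound each remainder's oscillation between grid points, and apply Rosenthal/Ibragimov-type moment inequalities for $\alpha$-mixing sequences at each grid point; the logarithmic cost of the union bound is absorbed by the spare powers of $n$ every remainder carries. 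The rate constraint $M=O(\sqrt{nh})$ of \ref{ass:M} enters exactly here, keeping the number of estimated autocorrelations small enough for the plug-in error in the correction factor to remain $o_P(n^{-4/5})$.
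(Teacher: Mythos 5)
Your strategy is, in substance, the same as the paper's: show that the Altman-type corrected criterion satisfies $\text{CV}(h)=\text{($h$-free term)}+\text{MISE}(h;\hat{s})+o_p(n^{-4/5})$ uniformly over $H$, so that the minimizer of $\text{CV}$ and the minimizer of $\text{MISE}$ coincide asymptotically. The difference is execution: you re-derive everything self-containedly (explicit expansion of $n^{-1}\sum_i\hat{\eps}_i^2$, explicit cancellation of the $2\mathcal{K}(0)(nh)^{-1}\sum_k\gamma(k)$ term by the correction factor, argmin consistency after the reparametrisation $h=gn^{-1/5}$), whereas the paper reduces the problem to expressions (14) and (22) of \citet{altman-1990} and to the argument of \citet{chu-marron-1991}, after showing that the truncated, estimated correction factor agrees with Altman's known-$\rho$ factor up to $o_p(r_n)$. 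Your handling of the correction factor (tail $\sum_{|j|>M}\rho(j)$ plus plug-in error, both killed by \ref{ass:M}) is the same mechanism as the paper's $Q_1$ bound, so this buys transparency rather than a different idea.

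Two points need repair. First, the centering: the claims ``$n^{4/5}\bigl(\text{CV}(gn^{-1/5})-\sigma^2\bigr)\convp\phi(g)$ uniformly'' and ``$n^{-1}\sum_i\hat{\eps}_i^2=\sigma^2+\dots+o_P(n^{-4/5})$'' are false as written, because $n^{-1}\sum_i\eps_i^2-\sigma^2=O_P(n^{-1/2})$, which is of larger order than $n^{-4/5}$ (and $n^{4/5}\cdot O_P(n^{-1/2})$ diverges). The offending fluctuation is $h$-free, so your argmin argument survives, but you must center at the random quantity $n^{-1}\sum_i\eps_i^2$ — exactly as the paper states $\text{CV}(h)=n^{-1}\sum_t\eps_t^2+\text{MISE}(h;\hat{s})+o_p(r_n)$ — and then note that when the $h$-dependent part of $c(h)^{-2}$, of order $(nh)^{-1}$, multiplies this $O_P(n^{-1/2})$ fluctuation the product is $o_P(n^{-4/5})$, so no $h$-dependent error is reintroduced. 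Second, your uniform rate $\max_{|j|\le M}\abs{\hat{\rho}(j)-\rho(j)}=O_P(\sqrt{\log M/n})$ is not warranted by \ref{ass:eps}, which gives only polynomial mixing and $4+2\delta$ moments; it is also unnecessary, since bounding the summed plug-in error in expectation, as the paper does, already yields $\frac{M}{nh}O_p(n^{-1/2})+\frac{M}{nh}O_p(r_n)+\frac{M^2}{n}O_p\bigl((nh)^{-1}\bigr)=o_p(n^{-4/5})$ under \ref{ass:M}. Finally, the uniform-in-$h$ control of the stochastic remainders that you defer to a grid-plus-Rosenthal argument is the step the paper outsources to \citet{chu-marron-1991}; with only the moments of \ref{ass:eps} that chaining needs some care, but it is standard and comparable in rigour to what the paper itself provides.
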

Proof of Proposition~\ref{prop:1} is given in the Appendix. It shows that  $\hat
h$ achieves  the optimal global bandwidth.  The previous result improves the
existing literature in several aspects. Previous works on kernel regression with
correlated errors all requires stronger assumptions on $\ProcEps$,  e.g.
linearity, Gaussianity, existence of high order moments and some stringent
technical conditions
\citep[see][]{altman-1990,altman_1993,Hart_1991,xia_li_2002,
  FranciscoFernandez_etal_2004}. None of the contributions in the existing
literature treats the choice of the smoothing parameters in the cross-validation
function, that is $M$. \cite{FranciscoFernandez_etal_2004} mentions its crucial
importance, but no clear indication on how to set it is given. \ref{ass:M}
improves upon this giving a clear indication of how  this tuning  has to be
automatically fixed in order to achieve optimality. In fact, Proposition
\ref{prop:1} suggests to take  $M=\lfloor\sqrt{n h}\rfloor$. Therefore the
smoothing step is completely data-driven. Notice that alternatively standard
cross-validation would require to fix two tuning parameters \citep[see Theorem
2.3 in][]{Hall-a-etal-1995}.  

\begin{figure}[!t]
\centering
\includegraphics[width=\textwidth]{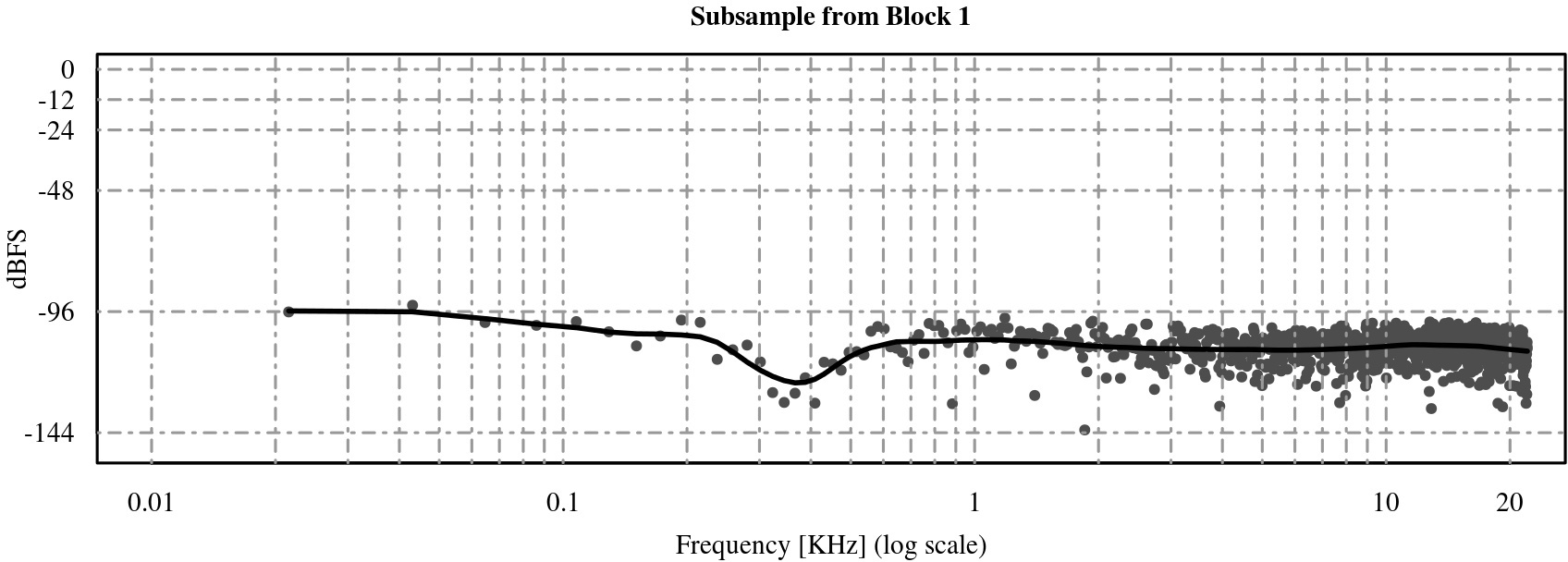}\\
\vspace{1em}
\includegraphics[width=\textwidth]{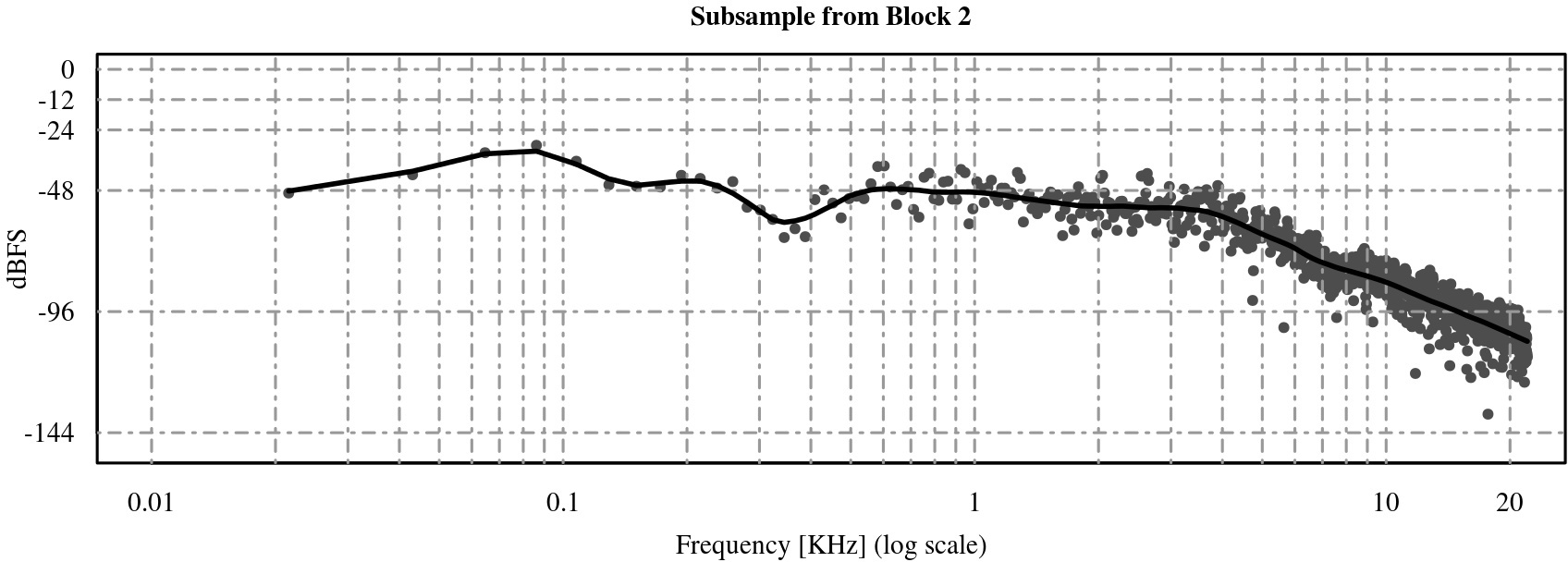}
\caption{Points represent windowed periodogram power spectral density estimates of the SSW  obtained in two subsamples of size 50ms extracted from the wave reported in Figure~\ref{fig:intheflash2039}. The solid line has been obtained by kernel smoothing.  The top plot refers to a subsample randomly chosen within ``Block 1'', while the bottom plot refers to a subsample randomly chosen within ``Block 2''.}
\label{fig_intheflash_residuals}
\end{figure}

\begin{figure}[!t]
\centering
\includegraphics[height=.3\textheight]{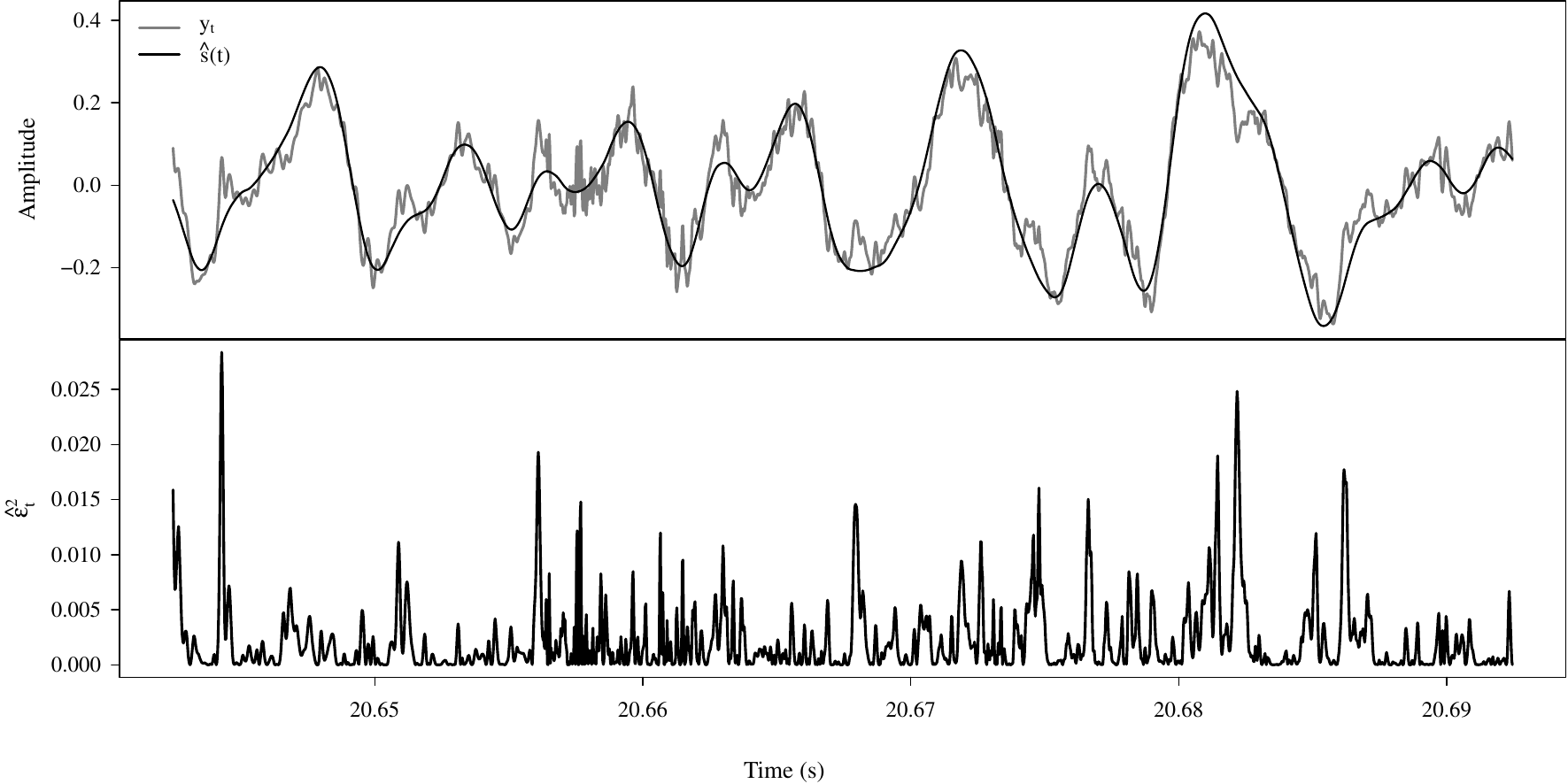}\\
\caption{Estimated signal and squared residuals for a subsamples of  50ms randomly 
extracted from the second half of the wave reported in Figure~\ref{fig:intheflash2039}.
The estimated optimal bandwidth is $\hat h=0.01521$.}
\label{fig_intheflesh_hat_s}
\end{figure}

In order to see how the behavior of $\ProcEps$ is time-varying, see
Figure~\ref{fig_intheflash_residuals}.  The SSW has been estimated based on
\eqref{eq:ker} and \eqref{eq:CV} on subsamples of length equal to 50ms. The
first subsample has been randomly chosen within Block 1 of
Figure~\ref{fig:intheflash2039}, while the second has been randomly chosen
within Block 2. Discrete-time Fourier transform measurements have been windowed
using Hanning window. Points in the plots correspond to spectral estimates at
FFT frequencies scaled to dBFS (log-scale).  It can be seen that the two
spectrum show dramatic differences. In the first one the energy spread by the
SSW is modest and near the shape and level of uncorrelated quantization
noise. On the other hand, the bottom spectrum shows a pattern that suggests that
correlations vanish at slow rate which is consistent with \ref{ass:eps}. The
steep linear shape of on log-log coordinates above 3KHz reminds us approximately
the shape of the $1/f$--noise spectrum, however below 3KHz the almost flat
behavior suggests a strong departure from the $1/f$--noise hypothesis.  All this
confirms the idea that there are music sequences where the SSW in \eqref{eq:Yt}
cannot be seen as the usual ``error term''. Tests proposed in
\cite{Berg_McMurry_Politis_2012} also lead to rejection of the linear hypothesis
for the SSW.  In the end, it is remarkable that these extremely diverse
stochastic structures coexist within just one second of music.
In Figure~\ref{fig_intheflesh_hat_s} we show the estimated $\hat s(\cdot)$ against
observed data (top panel), and  the corresponding $\hat \eps_t^2$ for a 50ms
subsample taken from the same song. The large spikes in the bottom panel
correspond to situations where the amplitude variations  increase unexpectedly
so that they are caught by the stochastic component of the model.

\subsection{Random Subsampling}\label{sec_random_subsampling}
Equation \eqref{eq:pcm_power} only takes into account sum of squares, this is
because theoretically PCM  are  always scaled to have zero mean. Notice that,
even though we assume that $\ProcEps$ has zero expectation, we define RMS based
on  variances taking into account the fact that quantization could introduce an
average offset in the PCM samples.  Let us introduce the following  quantities:   
\begin{equation}\label{eq:Vn}
V_n=\frac{1}{n-1}\sum_{i=1}^n\left(\eps_i-\bar{\eps} \right)^2, %
\qquad \text{with} \quad \bar{\eps}=\frac{1}{n} \sum_{i=1}^n \eps_i.
\end{equation}
The distribution of the RMS power of the SSW is given by the  distribution of $\sqrt{V_n}$.  By \ref{ass:eps} it can be shown that
$\sqrt{n} (V_n - \sigma^2_{\eps} ) \convd \text{Normal}(0,V),$ where
$\sigma^2_{\eps} =\ex[\eps_t^2]$ and $ V = \lim_{n\rightarrow\infty} n
\var[V_n]$. From now onward, $G(\cdot)$ will denote the distribution function of
a $\text{Normal}(0,V)$.  

Although  the sequence $\set{\eps_i}$ is not observable, one can approximate its
power distribution based on $\hat \eps_i=y_i-\hat{s}(i/n)$. Replacing $\eps_i$
with $\hat \eps_i$ in \eqref{eq:Vn} we  obtain: 
\begin{equation}\nonumber
\hat V_n=\frac{1}{n-1}\sum_{i=1}^n\left(\hat \eps_i-\bar{\hat \eps} \right)^2, %
\qquad \text{with} \quad \bar{\hat \eps}=\frac{1}{n} \sum_{i=1}^n \hat \eps_i.
\end{equation}
The distribution of $\hat V_n$ can now be used to approximate the distribution
of $V_n$.  One way to do this is to implement a subsampling scheme \`{a} la
\cite{politis-romano-wolf-2001}. That is, for all blocks of observations of
length $b$ (subsample size) one compute $\hat V_n$, in this case there would be
$n-b+1$ subsamples to explore. Then one hopes that the empirical distribution of
the $n-b+1$ subsample estimates of $\hat V_n$ agrees with the distribution of
$V_n$ when both $n$ and $b$ grow large enough at a certain relative speed. This
is essentially the subsampling scheme proposed by
\cite{politis-romano-wolf-2001} and \cite{Politis_Romano_2010}, but it is of
limited practical use here because the scale of $n$ is usually in millions, and
the computation of  $\hat s(\cdot)$ on the entire sample  would
require a huge computational effort that is  of order $O(n^2)$. This is because the Kernel estimation procedure requires a number o iterations of order $O(n)$ for each point of the support.  On a modern
computer\footnote{A computer equipped with an Intel i7 3.4GHz quad-core 
processor, 16GB of memory, and 64bit software.}  
an highly optimized software programmed in 
C language takes about 17 hours to compute $\hat s(\cdot)$ on a 3min song
($n=15,876,000$ samples), and about 38 hours for a 4.5min song
($n=23,814,000$). This figures are for a fixed global $h$. If we perform
cross-validation on a grid with 25 points (a reasonable choice), the estimate
becomes 17 days for a 3min song, and almost 40 days for a 4.5min song. And then
one needs to add the computing time needed for the subsampling steps which will
depend on $b$ and $K$.
The blockwise smoothing algorithm solves the problem by introducing a
variant to the classical subsampling scheme previously described. Namely instead of
estimating $s(\cdot)$ on the entire series, we estimate it on each subsample
separately, then we use the average estimated error computed block-wise instead
of $\bar{\hat \eps}$ computed on the whole sample. Moreover a blockwise kernel
estimate of $s(\cdot)$ allows to work with the simpler global bandwidth instead
of the more complex local bandwidth without loosing too much in the smoothing
step. The blockwise smoothing algorithm proposed in this paper (with the default choice of $b$ and $K$
discussed afterwards) computes the proposed DR statistic
in about 30s on the same computer for a 3min song, this saves us 17 days of
computing.\\
 

Let $\hat s_b(\cdot)$ be the estimator of $s(\cdot)$ on a subsample of length $b$, that is 
\begin{equation}\label{eq_sb}
\hat{s}_b(t) = \frac{1}{bh} \sum_{i=1}^b \mathcal{K}\left(\frac{t-i/b}{h}\right)y_i.
\end{equation}
At a given time point $t$ we consider a block of observations  of length $b$ and
we consider the following statistics  
\begin{equation}\nonumber
V_{n,b,t}=\frac{1}{b-1}\sum_{i=t}^{t+b-1} (\eps_i- \bar{\eps}_{b,t})^2, %
\qquad \text{and} \qquad
\hat{V}_{n,b,t}=\frac{1}{b-1}\sum_{i=t}^{t+b-1} (\hat{\eps}_i-\bar{\hat{\eps}}_{b,t})^2,
\end{equation}
with $\bar{{\eps}}_{b,t}=b^{-1}\sum_{i=t}^{t+b-1} {\eps}_i$ and
$\bar{\hat{\eps}}_{b,t}=b^{-1}\sum_{i=t}^{t+b-1} \hat{\varepsilon}_i$.  Note
that in $\hat V_{n,b,t}$ we can consider  either $\hat{\varepsilon}_{i}=y_i-\hat
s(i/n)$ or $\hat{\varepsilon}_{i}=y_i-\hat s_b((i-t+1)/b)$, $i=t,\ldots,t+b-1$.
Of course, the bandwidth, $h$, depends on $n$ or $b$. We do not report this
symbol because it will be  clear from the context. The empirical distribution
functions of $V_{n,b,t}$ and $\hat{V}_{n,b,t}$ will  be computed as  
\begin{eqnarray}
G_{n,b}(x)&=&\frac{1}{n-b+1}\sum_{t=1}^{n-b+1}\uno\set{\sqrt{b}\left(V_{n,b,t}-V_n\right)\le
x}, \nonumber\\
\hat{G}_{n,b}(x)&=&\frac{1}{n-b+1}\sum_{t=1}^{n-b+1} \uno\set{\sqrt{b} (\hat{V}_{n,b,t}-V_n)\le x} ; \nonumber
\end{eqnarray}
where $\uno\set{A}$ denotes the usual indicator function of the set $A$. 
Furthermore,  the quantiles of the subsampling distribution also converges to
the quantities of interest, that is those of $V_n$. This is a consequence of the
fact that  $\sqrt{n}V_n$ converges weakly to a Normal distribution, let it be
$F$. Let  define the empirical distributions: 
\begin{eqnarray}
F_{n,b}(x)&=&\frac{1}{n-b+1}\sum_{t=1}^{n-b+1}\uno \set{ \sqrt{b}V_{n,b,t}\le
x},\nonumber\\
\hat{F}_{n,b}(x)&=&\frac{1}{n-b+1}\sum_{t=1}^{n-b+1} \uno \set{\sqrt{b}\hat{V}_{n,b,t}\le
x}. \nonumber
\end{eqnarray}
For $ \gamma \in (0,1)$ the quantities  $q(\gamma)$, $q_{n,b}(\gamma)$ and
$\hat{q}_{n,b}(\gamma)$ denote respectively the $\gamma$-quantiles with respect
to the distributions $F$, $F_{n,b}$ and $\hat{F}_{n,b}$.   We adopt  the usual
definition that  $q(\gamma)=\inf\set{x: F(x)\ge \gamma}$.  
However exploring all subsamples makes the procedure still computationally
heavy. A second variant is to reduce the number of subsamples by introducing a
random block selection. Let $I_i$, $i=1,\ldots K$ be random variables indicating
the initial point of every block of length $b$. We draw the sequence
$\set{I_i}_{i=1}^K$, with or without replacement, from the set
$I=\{1,2,\ldots,n-b+1\}$. The empirical distribution function of the subsampling
variances of $\eps_{t}$ over the random blocks will be: 
$$
\tilde{G}_{n,b}(x)=\frac{1}{K}\sum_{i=1}^{K} \uno \set{\sqrt{b} \tonde{ \hat{V}_{n,b,I_i}-V_{n}} \leq x },
$$
and the next results states the consistency of $\tilde{G}$ in  approximating $G$.
\begin{proposition}\label{prop:2} %
Assume {\ref{ass:s}}, {\ref{ass:eps}}, {\ref{ass:K}} and {\ref{ass:M}}.  Let
$\hat s_b(t)$ be the estimator of $s(t)$ on a subsample of length $b$. If $K
\rightarrow \Inf$, $b/n\rightarrow 0$, $b\rightarrow \Inf$  then
$\sup_x\left|\tilde{G}_{n,b}(x)-G(x)\right| \convp 0$ when $n\rightarrow \Inf$. 
\end{proposition}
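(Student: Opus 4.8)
The plan is to decompose the problem into three layers and control each in turn. First, establish the benchmark: for the \emph{full} (non-random) subsampling distribution built from the true errors, $G_{n,b}(x) \convp G(x)$ uniformly in $x$. This is the classical Politis--Romano--Wolf subsampling consistency for the statistic $\sqrt{n}(V_n - \sigma^2_\eps)$, which holds under \ref{ass:eps} because the $\alpha$-mixing and moment conditions give the CLT $\sqrt{n}(V_n - \sigma^2_\eps) \convd \text{Normal}(0,V)$ together with the mixing-based variance bounds that make the subsampling estimator consistent; the recentering by $V_n$ rather than by $\sigma^2_\eps$ is absorbed because $\sqrt{b}(V_n - \sigma^2_\eps) = (b/n)^{1/2}\sqrt{n}(V_n-\sigma^2_\eps) \convp 0$ when $b/n \to 0$. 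The standard argument (convergence of the subsampling CDF at each continuity point, plus monotonicity and continuity of the limit $G$, via a Pólya-type lemma) upgrades pointwise to uniform convergence.

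Second, pass from the true errors to the estimated errors, i.e.\ from $G_{n,b}$ (with $V_{n,b,t}$) to the version built from $\hat\eps_i = y_i - \hat s_b(\cdot)$ (with $\hat V_{n,b,t}$). The key is that on a block of length $b$, $\hat\eps_i - \eps_i = s(i/n) - \hat s_b(\cdot)$ is, by Proposition \ref{prop:1} applied at scale $b$ (and Corollary \ref{cor:1}, which is invoked here as the subsample analogue), uniformly of smaller order than $b^{-1/4}$, so that $\sqrt{b}(\hat V_{n,b,t} - V_{n,b,t}) \convp 0$ \emph{uniformly over} $t$. A uniform-in-$t$ bound on the bias and stochastic error of $\hat s_b$ is what transfers to a uniform bound on $|\hat G_{n,b}(x) - G_{n,b}(x)|$: writing $\hat V_{n,b,t} = V_{n,b,t} + R_{n,b,t}$ with $\sup_t \sqrt{b}|R_{n,b,t}| \convp 0$, a sandwiching argument between $G_{n,b}(x - \eta)$ and $G_{n,b}(x+\eta)$ combined with the continuity of $G$ gives $\sup_x |\hat G_{n,b}(x) - G(x)| \convp 0$.

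Third, pass from exploring all $n-b+1$ subsamples to the $K$ randomly drawn ones, i.e.\ from $\hat G_{n,b}$ to $\tilde G_{n,b}$. Conditionally on the data, $\tilde G_{n,b}(x)$ is an average of $K$ (with- or without-replacement) draws from the finite population $\{\uno\{\sqrt{b}(\hat V_{n,b,I}-V_n)\le x\}\}_{I \in \{1,\dots,n-b+1\}}$ whose population mean is exactly $\hat G_{n,b}(x)$; hence $\ex[\tilde G_{n,b}(x) \mid \text{data}] = \hat G_{n,b}(x)$ and $\var[\tilde G_{n,b}(x)\mid \text{data}] \le 1/(4K) \to 0$ (the without-replacement case only lowers the variance). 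So $\tilde G_{n,b}(x) - \hat G_{n,b}(x) \convp 0$ pointwise; a finite grid of $x$-values together with monotonicity of all functions and uniform continuity of $G$ again upgrades this to $\sup_x|\tilde G_{n,b}(x) - \hat G_{n,b}(x)| \convp 0$. Combining the three layers by the triangle inequality yields $\sup_x |\tilde G_{n,b}(x) - G(x)| \convp 0$.

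The main obstacle is the second layer: obtaining the \emph{uniform-in-$t$} control of the estimation error $\hat\eps_i - \eps_i$ over all $n-b+1$ blocks simultaneously, since Proposition \ref{prop:1} as stated is a statement about a single sample. One needs that the cross-validated bandwidth behaves well, and that the kernel smoother's bias ($O(h^2)=O(b^{-2/5})$) and variance ($O(1/(bh))=O(b^{-4/5})$) terms are uniformly small across blocks --- a maximal inequality for the $\alpha$-mixing errors (e.g.\ a moment or Bernstein-type bound, using the moment condition $\ex|\eps_t^2|^{2+\delta}<\infty$ to afford a union bound over $\sim n$ blocks) is the technical engine here, and the condition $M = O(\sqrt{bh})$ from \ref{ass:M} ensures the Altman-type correction factor does not destroy this uniformity. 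Once the $b^{-1/4}$-smallness of $\sup_t\sqrt{b}|R_{n,b,t}|$ is secured, the rest is routine.
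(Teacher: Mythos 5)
Your three-layer architecture (true-error subsampling consistency via Politis--Romano--Wolf, transfer to estimated errors, then the $K$ random draws handled conditionally on the data) matches the paper's, and your third layer --- conditional mean $\hat G_{n,b}(x)$, conditional variance $O(1/K)$, grid-plus-monotonicity upgrade --- is essentially the paper's appeal to Corollary 2.1 of Politis and Romano (1994). The problem is your second layer, which you yourself flag as the ``main obstacle'' and leave unresolved: you assert that one needs $\sup_t \sqrt{b}\,\abs{\hat V_{n,b,t}-V_{n,b,t}} \convp 0$ uniformly over all $n-b+1$ blocks, to be obtained from a maximal/Bernstein-type inequality under $\alpha$-mixing plus a union bound over $\sim n$ blocks. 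That claim is both unproven in your plan and stronger than what the result requires; it is far from clear it even holds under \ref{ass:s}--\ref{ass:M} (it would demand uniform rates for a kernel estimator with a cross-validated bandwidth simultaneously on $\sim n$ overlapping blocks, which the stated moment and mixing conditions do not obviously deliver). The paper avoids this entirely: since $\hat G_{n,b}$ and $G_{n,b}$ are \emph{averages} of indicators, a Slutsky-type sandwich gives
\begin{equation*}
\sup_x \abs{\hat G_{n,b}(x)-G(x)} \;\le\; \sup_x \abs{G_{n,b}(x\pm\xi)-G(x)} \;+\; \frac{1}{n-b+1}\sum_{t=1}^{n-b+1} \uno\set{\abs{Z_{2t}}>\xi},
\end{equation*}
with $Z_{2t}=\sqrt{b}(\hat V_{n,b,t}-V_{n,b,t})$, so only \emph{pointwise-in-$t$} convergence $Z_{2t}\convp 0$ (with rates uniform in law by stationarity) is needed for the second term to vanish in probability. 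That pointwise control follows directly from the Proposition \ref{prop:1} computation applied at block scale: $b^{-1}\sum_{i=t}^{t+b-1}(\hat\eps_i-\eps_i)^2=O_p(b^{-4/5})$ for each $t$, and $\sqrt{b}\,b^{-4/5}\to 0$. So the missing idea in your proposal is precisely that the empirical-CDF structure lets you trade the sup over $t$ for an average, removing the need for any maximal inequality.

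Two smaller points. First, invoking Corollary \ref{cor:1} inside this argument is circular: in the paper that corollary is a consequence of Proposition \ref{prop:2}, not an input to it. Second, note that in Proposition \ref{prop:2} the residuals are built from $\hat s_b$ fitted separately on each block, so the relevant rate is $O_p(b^{-4/5})$ (bias $O(h^2)$ and variance $O(1/(bh))$ at the block-optimal $h\asymp b^{-1/5}$), not a rate in $n$; your bias/variance accounting is consistent with this, but the paper makes the passage through an intermediate Lemma (full-sample $\hat s$, rate $O_p(n^{-4/5})+O(1/b)$) and then swaps in $\hat s_b$, which is the cleanest way to reuse the Politis--Romano--Wolf theorem.
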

We can also establish consistency for the quantiles based on $\set{
  \hat{V}_{n,b,I_i} }_{i=1}^{K}$. Let define the distribution function 
$$
\tilde{F}_{n,b}(x) = \frac{1}{K} \sum_{t=1}^{K} \uno \set{ \sqrt{b}\hat{V}_{n,b,I_t}\le x},
$$
and let  $\tilde{q}_{n,b}(\gamma)$ be the $\gamma$-quantile with respect to  $\tilde{F}$.
\begin{corollary}\label{cor:1} %
Assume {\ref{ass:s}}, {\ref{ass:eps}}, {\ref{ass:K}} and {\ref{ass:M}}.  Let
$\hat s_b(t)$ be the estimator of $s(t)$ on a subsample of length $b$. If $K
\rightarrow \Inf$, $b/n\rightarrow 0$, $b\rightarrow \Inf$  then
$\tilde{q}_{n,b}(\gamma) \convp q(\gamma)$ when $n\rightarrow \Inf$. 
\end{corollary}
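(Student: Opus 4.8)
\textbf{Plan of proof of Corollary~\ref{cor:1}.}

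The plan is to deduce the convergence of the quantiles from a uniform convergence of the underlying empirical distribution functions, in the classical subsampling style. The generic tool is the standard fact that if $H_n$ is a (possibly random) sequence of distribution functions with $\sup_x|H_n(x)-H(x)|\convp 0$ and $H$ is continuous and strictly increasing in a neighbourhood of $q(\gamma)=\inf\{x:H(x)\ge\gamma\}$, then the $\gamma$-quantile of $H_n$ converges in probability to $q(\gamma)$ (pass to a subsequence along which the convergence is almost sure and apply the deterministic inversion argument). Since the limit $F$ here is a $\mathrm{Normal}$ law, hence everywhere continuous and strictly increasing, the whole matter reduces to showing $\sup_x|\tilde F_{n,b}(x)-F(x)|\convp 0$; the quantile statement then follows mechanically.

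To obtain that uniform bound I would not handle $\tilde F_{n,b}$ directly but reduce it to the centred object already controlled in Proposition~\ref{prop:2}. Writing $\sqrt b\,\hat V_{n,b,I_t}=\sqrt b(\hat V_{n,b,I_t}-V_n)+\sqrt b\,V_n$ yields the exact identity $\tilde F_{n,b}(x)=\tilde G_{n,b}\!\left(x-\sqrt b\,V_n\right)$. By Proposition~\ref{prop:2}, $\sup_x|\tilde G_{n,b}(x)-G(x)|\convp 0$ with $G$ the $\mathrm{Normal}(0,V)$ c.d.f., which is uniformly continuous, so the composition with a random shift is benign once the shift is pinned down. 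For this I would use $\sqrt b\,V_n=\sqrt b\,\sigma_\eps^2+\sqrt{b/n}\,\big(\sqrt n(V_n-\sigma_\eps^2)\big)$, note that $\sqrt n(V_n-\sigma_\eps^2)=O_p(1)$ by the $\alpha$-mixing CLT for $V_n$ stated under \ref{ass:eps}, and invoke $b/n\to 0$ to get $\sqrt b\,V_n-\sqrt b\,\sigma_\eps^2\convp 0$. Combining with uniform continuity of $G$ gives $\sup_x\big|\tilde F_{n,b}(x)-G(x-\sqrt b\,\sigma_\eps^2)\big|\convp 0$; identifying $F$ with the correspondingly centred, $\sqrt b$-scaled Normal limit $G(\,\cdot-\sqrt b\,\sigma_\eps^2)$ that governs $F_{n,b}$, this is exactly the uniform convergence needed, and the quantile lemma closes the argument.

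The genuinely delicate points are two. First, propagating the ``in probability, uniform'' convergence of $\tilde G_{n,b}$ through composition with the \emph{random} shift $\sqrt b\,V_n$: this is where one needs both the uniform continuity of the Normal c.d.f.\ (a Polya-type step) and the negligibility estimate $\sqrt b(V_n-\sigma_\eps^2)\convp 0$, the latter being the only place the rate restriction $b/n\to 0$ is used here. Second, the bookkeeping that matches the centring/scaling of $F$ (hence of $q(\gamma)$) with the $\sqrt b$-normalisation of $\tilde F_{n,b}$ (hence of $\tilde q_{n,b}(\gamma)$), so that the drift terms cancel and the difference $\tilde q_{n,b}(\gamma)-q(\gamma)$ is what vanishes. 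Neither requires a new estimate beyond Proposition~\ref{prop:2}. A fully self-contained alternative would be to re-run the proof of Proposition~\ref{prop:2} verbatim with the indicator $\uno\{\sqrt b\hat V_{n,b,I_t}\le x\}$ replacing $\uno\{\sqrt b(\hat V_{n,b,I_t}-V_n)\le x\}$, which only shifts the argument of the limiting c.d.f.\ by a deterministic amount; but the reduction above is shorter and reuses Proposition~\ref{prop:2} as a black box.
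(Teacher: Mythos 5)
Your proposal is correct and takes essentially the same route as the paper: the published proof simply invokes Lemma~\ref{lemma1}(ii), i.e.\ the Politis--Romano--Wolf quantile-convergence argument, with Proposition~\ref{prop:2} supplying the uniform convergence of the centred subsampling distribution to the continuous, strictly increasing Normal limit, which is exactly the reduction you spell out via $\tilde F_{n,b}(x)=\tilde G_{n,b}\bigl(x-\sqrt{b}\,V_n\bigr)$ together with $\sqrt{b}\,(V_n-\sigma^2_{\eps})\convp 0$ from $b/n\to 0$. The centring bookkeeping you discuss is the only point where you give more detail than the paper's one-line proof, which leaves it implicit in the citation of Theorem~5.1 of \citet{politis-romano-wolf-2001}.
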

Proposition \ref{prop:2} and Corollary \ref{cor:1} are novel in two
directions. First, the two statements are based on  $\hat{\varepsilon}_i$ rather
than observed $\varepsilon_i$ as in standard subsampling. Second, we replace
$s(\cdot)$ by $s_b(\cdot)$ allowing for local smoothing without using
local-windowing on the entire sample.\\ 
Hence the subsampling procedure proposed here  consistently estimates the
distribution of $V_n$ and its quantiles. The key tuning constant of the
procedure is $b$. One can estimate an optimal $b$,  but again we have to accept
that the astronomically large  nature of $n$  would take  the whole estimation
time infeasible. Moreover, for the particular problem at hand, there are subject
matter considerations that can effectively drive the choice of  $b$. For music
signals dynamic variations are usually investigated on time intervals ranging
from 35ms to 125ms (these are metering ballistics established with the
IEC61672-1 protocol). Longer time horizons up to 1s are also used, but these are
usually considered for long-term noise pollution monitoring. In professional
audio software,  50ms is usually the default starting value. Therefore, we
suggest to start from $b=2205$ as the ``50ms--default''  for signals recorded at
the standard 44.1KHz sampling rate.

  
\section{Dynamic range statistic}\label{sec:dynam-range-stat} 
The random nature of $\eps_t$ allows us to use statistical theory to estimate
its distribution. If the SSW catches transient energy variations, then its
distribution will highlight important information about the dynamic.  The square
root of $\hat V_{n,b,I_i}$ is a consistent estimate of the RMS power of $\eps_t$
over the block starting from $t=I_i$. The loudness of the $\eps_t$ component
over each block can be measured on the dBFS scale taking 
$-10\log_{10}\hat V_{n,b,I_i} $.  In analogy with DRs we can define a DR
measure based on the subsampling distribution of $\hat V_{n,b,I_i}$.  We define
the DR measure blockwise as $\text{DR}_{n,b,I_i} = -10\log_{10}\hat V_{n,b,I_i}$. For a
sound wave scaled onto the interval [-1,1] this is actually a measure of DR of
$\eps_t$ because it tells us how much the SSW is below the maximum attainable
instantaneous power. We propose a DR statistic, the  
``{\sl Median Stochastic DR}'', defined as the median of the
subsampling distribution of $\text{DR}_{n,b,\cdot}$:
\[ 
\text{MeSDR} = \text{med}_K \{\text{DR}_{n,b,I_i}; \;\; i=1,2\ldots,K\},
\] 
where $\text{med}_K\{\cdot\}$ denotes the empirical median over a set of $K$
observations.  The MeSDR is a consistent estimator of
$-10\log_{10}\sigma_{\varepsilon}^2$, where
$\sigma^2_{\varepsilon}=\mbox{E}(V_n)$ is a parameter of the process $\ProcEps$.
In order to see this note that: %
(i) in the limit $\sigma^2_\varepsilon$  is the center of a symmetric distribution (see  Section\ref{sec_random_subsampling}); 
(ii) by Corollary \ref{cor:1} the median of the subsampling quantites $\hat{V}_{n,b,\cdot}$ is consistent for the median of $V_n$; 
(iii) the $\log(\cdot)$ function is continuous and strictly monotonic, hence the median is invariant to such a transformation. 
Of course in the asymptotic regime with probability one both the sample mean, and the sample median would give the same consistent estimate for $\sigma_{\varepsilon}^2$, but in finite samples there are reasons to prefer the
median. 
A DR statistic is a measure of spread that compares the location
of the power distribution with its peak.  Existing DR measures are based on mean power (e.g. the DRs in \eqref{eq:AvgDR}), but in finite samples the estimated mean is pushed toward the extreme peaks so that the spread of location-vs-peak may not be  well represented. Since the median is less influenced by the peaks, it is more appropriate to represent the spread. This is better understood based on results in Section \ref{sec:empirical_evidence_montecarlo}. \\

The numerical interpretation of MeSDR is straightforward. Note that for waves
scaled onto [-1,1] it is easy to see that our statistic is expressed in dBFS. If
the wave is not scaled onto [-1,1], it suffices to add $20\log_{10}$( maximum
absolute observed sample), and this will correct for the existence of
headroom. Suppose MeSDR=20, this means that 50\% of the stochastic sound power
is at least 10dBFS below the maximum instantaneous power. Large values of
MeSDR indicate large dynamic swings.  Furthermore, it can be argued that the SSW
not only catches transients and non-periodic smooth components. In fact, it's
likely that it fits noise, mainly quantization noise. This is certainly true,
but quantization noise operates at extremely low levels and its power is
constant over time. Moreover, since it is likely that digital operations
producing DR compression (compressors, limiters, equalizers, etc.) increase the
quantization noise (in theory this is not serially correlated), this would
reflect in a decrease of MeSDR, so we should be able to detect DR compression
better than classical DRs-like measures. \\

  
\section{Simulation experiment}\label{sec:empirical_evidence_montecarlo}

\begin{figure}[!t]
\centering
\includegraphics[width=\textwidth]{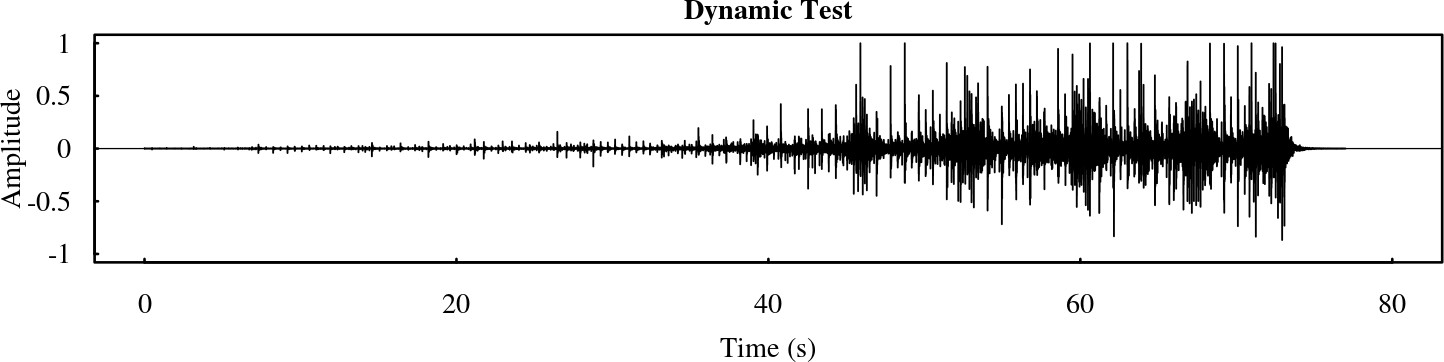}\\
\vspace{1em}
\includegraphics[width=\textwidth]{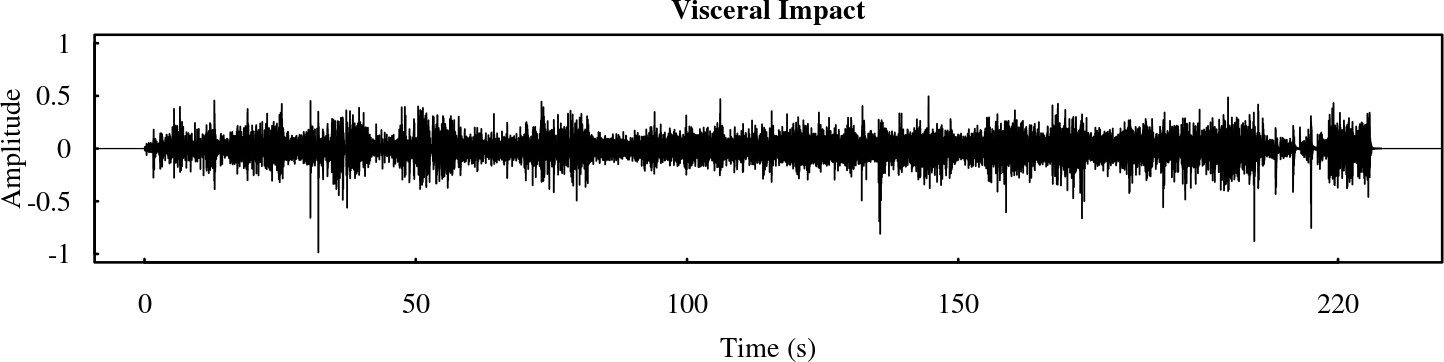}
\caption{Waveforms of the left channel of the songs titled  ``{\sl Dynamic Test}'' (track no.29), and ``{\sl Visceral Impact}'' (track no.17) from the audiophile CD  ``{\sl Ultimate Demonstration Disc: Chesky Records' Guide to Critical Listening}'' (Chesky Record, catalog  no. UD95).}%
\label{fig:chesky_tracks}
\end{figure}
A common way of assessing a statistical procedure is to simulate data from a
certain known stochastic process fixed as the reference truth, and then compute
Monte Carlo expectations of bias and efficiency measures. The problem here is
that  writing down a stochastic model capable of reproducing the features of
real-world music signal is too complex. Instead of simulating such a signal we
assess our methods based on simulated perturbations on  real data. We considered
two well recorded songs  and we added various degrees of dynamic compression to
assess  whether our measure is able to highlight dynamic differences. A good
method for estimating a measure of DR should consistently measure the loss of DR
introduced by compressing the dynamic. In order to achieve a fair comparison we
need songs  on which little amount of digital processing has been
applied. Chesky Records is a small  label specialized in audiophile recordings,
their ``{\sl Ultimate Demonstration Disc: Chesky Records' Guide to Critical
  Listening}'' (catalog number UD95), is almost a standard among audiophiles as
test source for various aspects of hifi reproduction.  We consider the left
channel   of tracks no.29, called ``{\sl Dynamic Test}'', and track no.17 called
``{\sl Visceral Impact}''. Both waveforms are reported in Figure
\ref{fig:chesky_tracks}. The ``Dynamic Test'' consists of a drum recorded near
field played with an increasing level. Its sound power is so huge that  a  voice
message warns against play backs at deliberately  high volumes,  which in fact
could cause equipment and hearing damages. Most audiophiles subjectively
consider this track as one of the most illustrious example of dynamic
recording. The track is roughly one  minute long.  The ``{\sl Visceral Impact}''
is actually the song ``{\sl Sweet Giorgia Brown}'' by Monty Alexander and
elsewhere published in the  Chesky catalog. The song has an energetic groove
from the beginning to the end and it's about three minutes long. Differently
from the previous track, that has increasing level of dynamic, this song has a
uniform path. This can be clearly seen in   Figure  \ref{fig:chesky_tracks}.\\ 

We removed initial and final silence from both tracks, and the final length (in
sample units) for ``{\sl Dynamic Test}'' is $n=2,646,000$, while $n=7,938,000$
for the second song.  We then applied compression on both waves. A dynamic
compressor is a function that whenever the original signal exceeds  a given
power (threshold parameter), the power of the output is scaled down  by a
certain factor (compression ratio parameter).     
With a threshold of -12dBFS and a compression ratio of 1.5, whenever the signal
power is above -12dBFS, the compressor reduces the signal level to $2/3$ so that
the input power is 1.5${\times}$(output power).  All this  has been performed
using SoX, an high quality audio digital processing software, with all other
tuning parameters set at default values. For both threshold levels equal to
-12dBFS and -24dBFS we applied on each song compression ratios equal to
$\{1.5,2,2.5,3,3.5,4,4.5,5\}$. There are a total of 16  compressed versions of
the original wave for each song. Hence the total number of tracks involved in
the simulation experiment is 34. Even though the random subsampling makes the
computational effort feasible, 34 cases still require a considerable amount of
computations.  The subsampling algorithm has been run with  $K=500$ for both
songs, while $b=2205$ (which means 50ms) for ``{\sl Dynamic Test}'', and
$b=3528$ (which means 80ms) for ``{\sl Visceral Impact}''. A larger  $b$ for
larger $n$ obeys  the theoretical requirements that $b/n \conv  0 $ as $n$ and
$b$ grow to $\infty$. The constant $M$ is fixed according to Proposition
\ref{prop:1}, i.e. $M= \lfloor \sqrt{n \hat h} \rfloor$. Stability analysis has
been conducted changing these parameters. In particular, we  tried several
values of $b$ for both tracks, but results did not change overall. A larger
value of $K$ also had almost no impact on the final results, however larger $K$
increases the computational load considerably. In a comparison like this, one
can choose to fix the seeds for all cases so that statistics are computed  over
the same subsamples in all cases.  However this would not allow to assess the
stability of the procedure against subsampling induced variance. The results
presented here are obtained with different seeds for each case, but fixed seed
has been tested and it  did not change the main results. Moreover, we estimate
$s(\cdot)$ in $(h,1-h)$ to avoid the well known issue of the boundary effect for
the kernel estimator. 
\begin{figure}[!t]
\centering
\includegraphics[width=.49\textwidth]{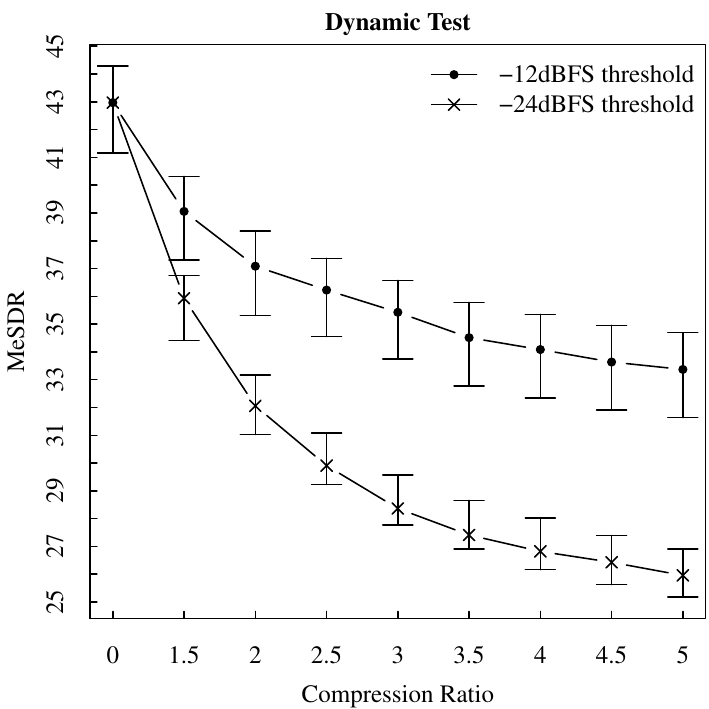}
\hfill
\includegraphics[width=.49\textwidth]{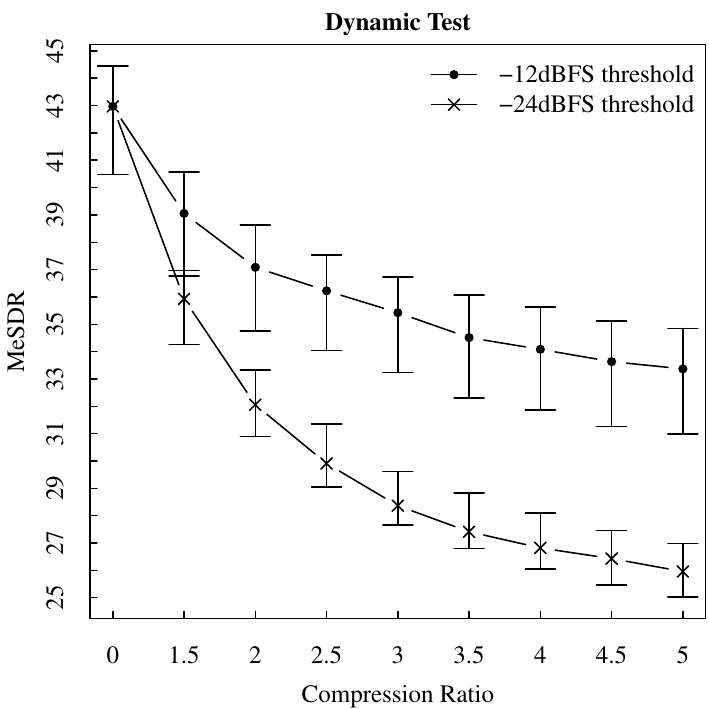}
\caption{MeSDR statistics for the song ``Dynamic Test''. The left plot reports
  90\%-confidence bands, while the right plot  reports 95\%-confidence bands.}%
\label{fig_msdr_dynamic}
\end{figure}
\begin{figure}[!t]
\centering
\includegraphics[width=.49\textwidth]{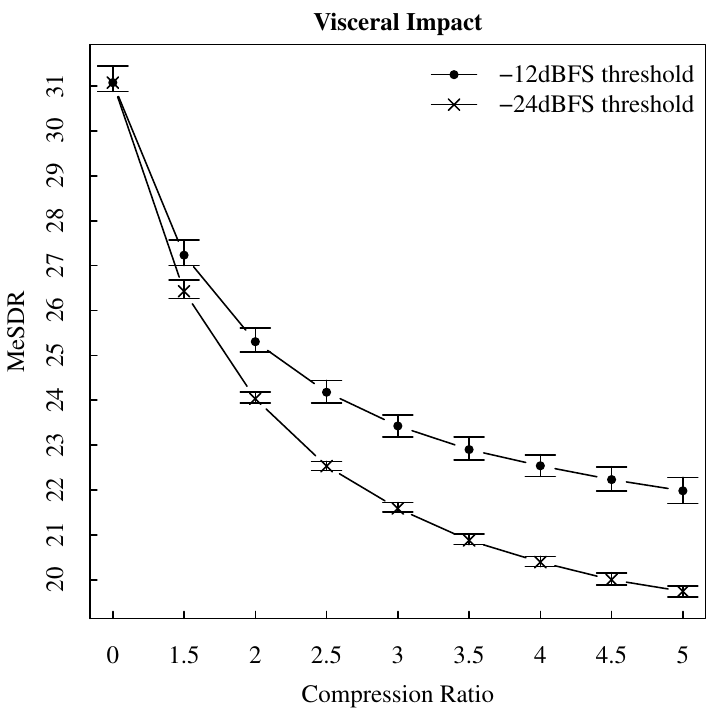}
\hfill
\includegraphics[width=.49\textwidth]{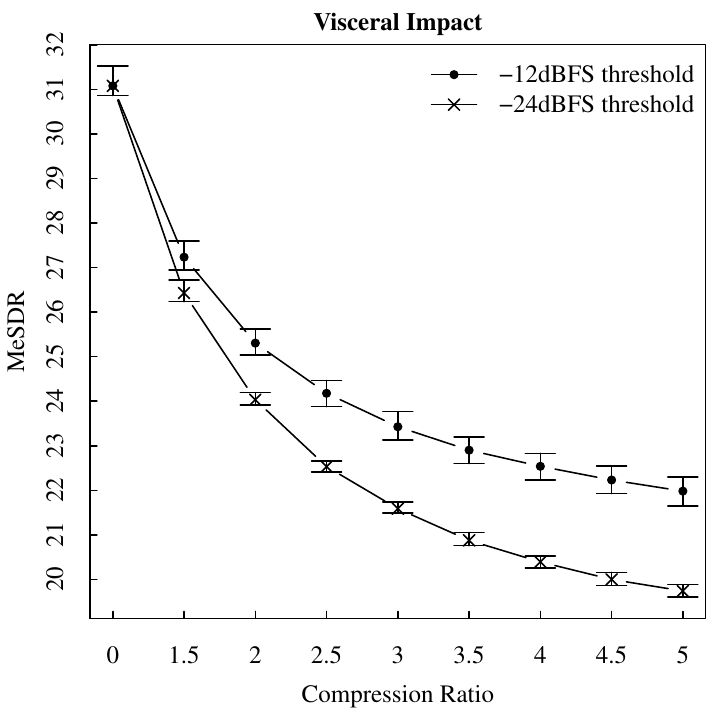}
\caption{MeSDR statistics for the song ``Visceral Impact''. The left  plot
  reports 90\%-confidence bands, while the right plot reports 95\%-confidence
  bands.}%
\label{fig_msdr_visceral}
\end{figure}

The results for all the cases are summarized in Figures \ref{fig_msdr_dynamic}
and \ref{fig_msdr_visceral} where we report  MeSDR statistics with 90\%  and
95\%-confidence bands. The simulation experiment reveals an interesting
evidence. 
\begin{enumerate}

\item First notice that each of the curves in Figures \ref{fig_msdr_dynamic} and
  \ref{fig_msdr_visceral} well emulates the theoretical behaviour of DR vs
  compression ratios. In fact, if we had  an ideal input signal with constant
  unit RMS power, the  DR decreases at the speed of $\log_{10}
  (1/\text{compression ratio})$ for any threshold value. When the RMS power is
  not constant, a consistent DR statistics should still  behave  similar to
  $\log_{10} (1/\text{compression ratio})$ with a curvature that depends both on
  the threshold parameter and the amount of power above the threshold.  

\item At both threshold levels, for both songs the MeSDR does a remarkable
  discrimination between compression levels.  For a given positive compression
  level none confidence bands for the -12dBFS threshold overlaps with the
  confidence bands for the -24dBFS case. Over 32 cases, a single overlap happens
  in   Figure \ref{fig_msdr_dynamic} for compression ratio equal to 1.5 when
  consider the wider 95\%--confidence interval.  

\item For both songs the confidence intervals are larger for the  -12dBFS case,
  and on average the ``Dynamic Test'' reports longer intervals. This is expected
  because increasing  the threshold from -12dBFS to -24dBFS will increase the
  proportion of samples affected by compression so that the variations of MeSDR
  will be reduced. Moreover we also expect that if the dynamic of a song doesn't
  have a sort of  uniform path, as in the case of the ``Dynamic Test'', the
  variability of the MeSDR will be larger.  Summarizing, not only the level of
  MeSDR, but also the length of the bands (i.e. the uncertainty) revels
  important information on the DR.  

\item There is a smooth transition going from 90\% to 95\%--confidence
  intervals.This is an indication that the tails of the distribution of the
  MeSDR are well behaved.  
\end{enumerate}
The experiment above shows how MeSDR is able to detect consistently even small
differences in compression levels. Hence it can be used to effectively
discriminate between recording quality. 

\begin{figure}[!t]
\centering
\includegraphics[width=.49\textwidth]{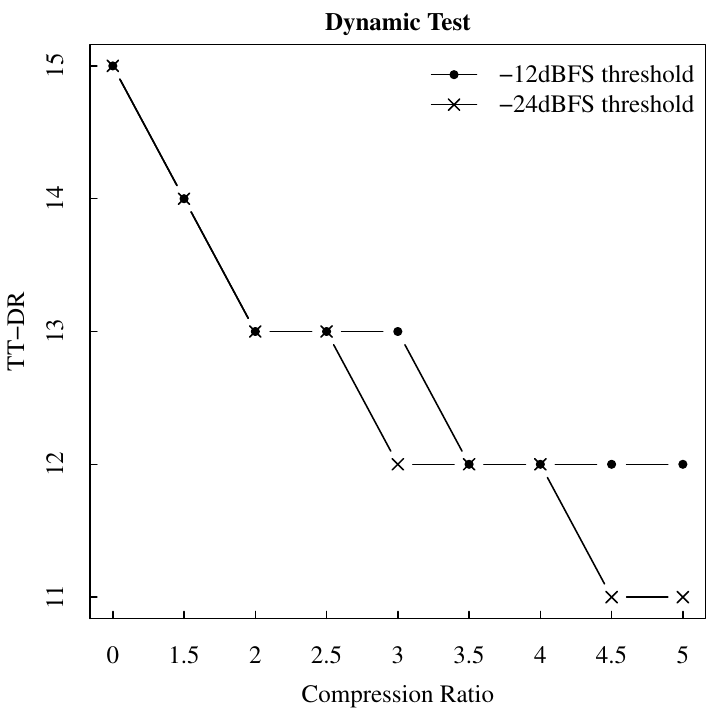}
\hfill
\includegraphics[width=.49\textwidth]{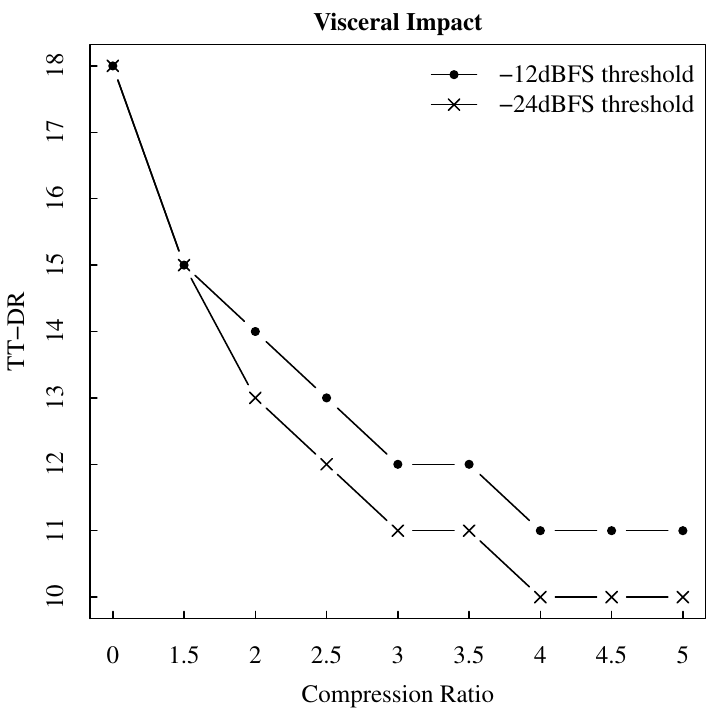}
\caption{TT-DR statistics for the song ``Dynamic Test'' (left) and ``Visceral Impact'' (right).}%
\label{fig_ttdr}
\end{figure}

In order to compare the results with  state of the art existing methods we
computed the TT-DR on the same data. The TT-DR is a popular measure of dynamic
range expressed in dBFS that  has gained a massive following. It is  is promoted
by the ``{\sl Pleasurize Music Foundation}'' (www.pleasurizemusic.com). TT-DR is
based on the sequential windowing of the  signal as for the DRs, but different
from this, it replaces the average power with the average of the power
measurements exceeding the 80\% quantile of the distribution. The idea behind
the TT-DR is that compression only affects the tail of the power
distribution. Results are reported in Figure \ref{fig_ttdr}. For the
uncompressed cases the TT-DR quantifies the dynamic of ``Dynamic Test'' as
larger than that of ``Visceral Impact'', although it is clear that the ``Dynamic
Test'' sounds more dynamic. This is because the TT-DR compares the mean of the 20\%
largest power measurements with the peak. The problem here is that the mean will
be pushed toward the peak if within the top 20\% power measurements there is
still an high degree of positive asymmetry.  The latter will happen particularly
in cases where the dynamic variations are as extreme as for the ``Dynamic
Test''. A major drawback of TT-DR is that overall the range of the statistic
across these cases does not exceeds 4dBFS for the ``Dynamic Test'', and 8dBFS
for the ``Visceral Impact'', which is an indication of the inability to capture
the DR concept consistently as the variations in dynamic levels here are
enormous.  For a given compression ratio the TT-DR does not always discriminate
between the -12dBFS and the -24dBFS thresholds, and overall the differences for
the two curves never exceeds 1dBFS. For a given threshold the TT-DR often  fails
to distinguish compression ratios, and again it is strange that this happens
more for the ``Dynamic Test''. The other disadvantage of the TT-DR is that the
sequential windowing (deterministic) does not allow to construct confidence
intervals and other inference tools. \\

\section{Real data application}\label{sec_real_dataset}
There are a number small record labels that gained success issuing remastered
versions of famous albums. Some of these reissues are now out of catalogue and
are traded at incredible prices on the second hand market. That means that music
lovers actually  value the recording quality.   On the other hand majors keep
issuing new remastered versions  promising miracles, they often  claim the use
of new super technologies termed with spectacular names. But music lovers are
often  critical. Despite the marketing trend of mastering music with obscene
levels of dynamic compression to make records sounding louder, human ears
perceive dynamic compression better than it is thought.  In this section we
measure the DR of three different digital masterings   of the song {``In the
  Flesh?''} from ``{\sl The Wall}'' album by Pink Floyd. The album  is
considered one of the best rock recording of all times and {``In the Flesh?''}
is a champion in dynamic, especially in the beginning (as reported in Figure
~\ref{fig:intheflash2039}) and at the end. We analysed three different masters:
the MFSL by Mobile Fidelity Sound Lab (catalog  UDCD 2-537, issued in 1990);
EMI94 by EMI Records Ltd (catalog  8312432, issued in 1994); EMI01 produced by
EMI Music Distribution (catalog 679182, issued in 2001). There are much more
remasters of the album not considered here.  The EMI01 has been marketed as a
remaster  with superior sound obtained with state of the art technology. The
MFSL has been worked out by a company specialized in classic album
remasters. The first impression is that the MFSL sounds softer than the  EMI
versions. However, there are  Pink Floyd fans arguing  that the MFSL sounds more
dynamic, and overall is  better than anything else. Also the difference between
the EMI94 and EMI01 is often discussed on internet forums with fans arguing that
EMI01 did not improve upon EMI94 as advertised.\\ 

\begin{table}[!t]
  \centering
  \caption{MeSDR statistics for ``{\sl In the Flesh?}'' from ``{\sl The Wall''} album  by Pink Floyd. ``Lower'' and ``Upper'' columns are  limits of the confidence intervals based on the asymptotic approximation of the distribution of the empirical quantiles.}
  \label{tab:intheflash}
  \begin{tabular}{llccccc}
    \toprule
    Seed number & Version & MeSDR & \multicolumn{2}{c}{90\%--Interval} & \multicolumn{2}{c}{95\%--Interval}\\
    &         &       &  Lower & Upper & Lower & ~Upper \\ 
    \midrule
    Equal   & MFSL   & 29.77 & 29.32 & 30.29 & 29.22 & 30.35 \\ 
    & EMI94  & 25.96 & 25.65 & 26.48 & 25.38 & 26.61 \\ 
    & EMI01  & 26.00 & 25.66 & 26.52 & 25.40 & 26.67 \\ 
    \midrule
    Unequal & MFSL   & 29.55 & 29.31 & 29.94 & 29.27 & 30.22 \\ 
    & EMI94  & 25.63 & 25.36 & 26.11 & 25.20 & 26.18 \\ 
    & EMI01  & 25.81 & 25.52 & 26.18 & 25.48 & 26.34 \\ 
    \bottomrule
  \end{tabular}
\end{table}

We measured the MeSDR of the three tracks and compared the results. Since there
was a large correlation between the two channels, we only measured the channel
with the largest peak, that is the left one.  The three waves have been
time--aligned, and the initial and final silence has been trimmed.  The MeSDR
has been computed with a block length of $b=2205$ (that is 50ms), and
$K=500$. We computed the MeSDR both with equal and unequal seeds across the
three waves to test for subsampling induced variability when $K$ is on the low
side.  Results are summarized in Table \ref{tab:intheflash}.  First notice the
seed changes the MeSDR only slightly. Increasing the value of $K$ to 1000 would
make this difference even smaller. The second thing to notice is that  MeSDR
reports almost no difference between EMI94 and EMI01. In a way the figure
provided by the MeSDR is consistent with most subjective opinions that the MFSL,
while sounding softer, has more dynamic textures. We recall here that a 3dBFS
difference is about twice the dynamic in terms of power. Table
\ref{tab:intheflash} suggests  that the MFSL remaster is about  4dBFS more
dynamic than competitors, this is a huge difference.\\  

A further question is whether there are statistically significant differences
between the sound power distributions of the tracks. It's obvious that the
descriptive nature of the approaches described in Section \ref{sec:tech} cannot
answer to such a question.  Our subsampling approach can indeed answer to this
kind of question by hypothesis testing.  If the mixing coefficients of
$\ProcEps$ go to zero at a proper rate, one can take the loudness measurements
$\set{ \text{DR}_{n,b,I_i}    \quad i=1,2,\ldots,K}$ as almost asymptotically
independent and  then  apply some standard tests. In order to assess the
differences in  MeSDR across the three masterings we performed the Mood's
median test \citep{Mood54}. The null hypothesis is that all masterings have the
same MeSDR against the alternative that at least one of them is different.  We
used $\text{DR}_{n,b,I_i}$ estimates from unequal seed calculations to reinforce 
the independence between the three groups. 
The resulting p-value is 8.95${\times}10^{-17}$, the latter confirms the subjective
perception that there are strong differences. Performing the Mood's test in pairs, the
p-value is extremely low when MFSL is compared against EMI94 or EMI01, while
the comparison between EMI94 and EMI01 gives a p-value=0.6579907. Although the
 median test here is a natural choice, \cite{FreidlinGastwirth2000} showed that
it has low power in many situations. The suggestion of \cite{FreidlinGastwirth2000} 
for two-sample problems is to use the Mann-Whitney tests for location shift.
Therefore, we applied the Mann-Whitney test for the null hypothesis that the DR distribution
of MFSL and EMI94 are not location-shifted, against the alternative that MFSL is
right shifted with respect to EMI94. The  resulting  p-value$<2.2\times
10^{-16}$ suggests to reject the null at any sensible significance  level, and this
confirms  that MFSL sounds more dynamic compared to  EMI94. Comparison between
MFSL and EMI01 leads to a similar result. We also tested the null hypothesis
that EMI94 and EMI01 DR levels  are not location-shifted, against the
alternative that there is some shift different from zero. The resulting
p-value=0.6797 suggests to not reject the null at any standard confidence
level. The latter confirms the figure suggested by MeSDR, that is, there is no
significant overall dynamic difference between the two masterings. \\

\section{Concluding Remarks}\label{sec:concl-final-remarks}
Starting from the DR problem we exploited a novel methodology to estimate the
variance distribution of a time series produced by a stochastic process additive
to a smooth function of time. The general set of assumptions on the error term
makes the proposed model flexible and general enough to be applied under various
situations not explored in this paper. The smoothing and the subsampling theory
is developed for fixed global bandwidth and fixed subsample size.  We
constructed a DR statistic that is based on the random term. This has two main
advantages: (i) it allows to draw conclusions based on inference; (ii) since
power variations  are about sharp changes in the  energy levels, it is likely
that these changes will affect the stochastic part of \eqref{eq:Yt} more than
the  smooth component. In a controlled experiment our DR statistic has been able
to highlight consistently dynamic range compressions. Moreover we provided an
example where the MeSDR statistic is able to reconstruct  differences perceived
subjectively on real music signals.

\appendix
\section*{\centering Appendix}\label{sec:proofs}

\paragraph{Proof of Proposition \ref{prop:1}.} First notice that \ref{ass:eps}--\ref{ass:K} in this paper imply that Assumptions A--E in \cite{altman-1990} are fulfilled. In particular \ref{ass:eps}  on   mixing coefficients of $\eps_t$ ensures that Assumptions E and D in \cite{altman-1990} are satisfied. 
Let %
$\hat{\gamma}(j)=\frac{1}{n}\sum_{t=1}^{n-j}\hat{\varepsilon}_t\hat{\varepsilon}_{t+j}$ %
be the estimator of the autocovariance $\gamma(j)$ with $j=0,1,\ldots$ and $r_n=\frac{1}{nh}+h^4=n^{-4/5}$ by \ref{ass:K}.  
First we note that by Markov inequality 
\begin{equation}\label{eq:proof_1_1}
\frac{1}{n}\sum_{i=1}^{n}\left(s(i/n)-\hat{s}(i/n)\right)^2 - \text{MISE}(h;\hat{s})  = o(r_n)
\end{equation} 
Let us now rearrange  $\hat{\gamma}(j)$ as 
\begin{eqnarray}
\label{eq:proof_1_2}
\hat{\gamma}(j) &=& %
\frac{1}{n}\sum_{i=1}^{n-j}\left(s(i/n)-\hat{s}(i/n)\right)\left(s((i+j)/n)-\hat{s}((i+j)/n)\right) + \nonumber \\  %
&+&  \frac{1}{n}\sum_{i=1}^{n-j}\left(s((i+j)/n)-\hat{s}((i+j)/n)\right)\varepsilon_{i} + \nonumber \\ %
&+&  \frac{1}{n}\sum_{i=1}^{n-j}\left(s(i/n)-\hat{s}(i/n)\right)\varepsilon_{i+j}+\frac{1}{n}\sum_{i=1}^{n-j}\varepsilon_i\varepsilon_{i+j} =I+II+III+IV 
\end{eqnarray}
By \eqref{eq:proof_1_1}  and Schwartz inequality it results that
$I=O_p\left(r_n\right)$. Now, consider term $III$ in \eqref{eq:proof_1_2}. Since
$\hat s(t)=s(t)\left(1+O_p\left(r_n^{1/2}\right)\right)$, it is sufficient to
investigate the behaviour of

\[
\frac{1}{n}\sum_{i=1}^{n-j}s(i/n)\varepsilon_{i+j}.
\]
By \ref{ass:s} and applying Chebishev inequality, it happens that
$III=O_p(n^{-1/2})$. Based on similar arguments one has that term
$II=O_p(n^{-1/2})$. Finally, $IV=\gamma(j)+O_p(n^{-1/2})$.  Then 
$\hat{\gamma}(j)=\gamma(j) + O_p(r_n) + O_p(n^{-1/2})+O_p(j/n)$, where the
$O_p(j/n)$ is due to the bias of $\hat{\gamma}(j)$. It follows that also
$\hat{\rho}(j) = \rho(j)  + O_p(r_n)+ O_p(n^{-1/2})+O_p(j/n)$. Since
$\mathcal{K}(\cdot)$ is bounded from above then one can write
\begin{align*}
\frac{1}{nh}\sum_{j=-M}^M \mathcal{K}\left(\frac{j}{nh}\right)\hat{\rho}(j) =& %
       \frac{1}{nh}\sum_{j=-M}^M \mathcal{K}\left(\frac{j}{nh}\right)\rho(j) + \\
{}     +& \frac{M}{nh}O_p(r_n)+\frac{M}{nh}O_p(n^{-1/2})+\frac{M^2}{n}O_p\left(\frac{1}{nh}\right);
\end{align*}
by \ref{ass:M} and $h=O(n^{-1/5})$, it holds true that
\begin{equation}
\label{eq:proof_1_3}
\frac{1}{nh}\sum_{j=-M}^M
\mathcal{K}\left(\frac{j}{nh}\right)\hat{\rho}(j)=\frac{1}{nh}\sum_{j=-M}^M
\mathcal{K}\left(\frac{j}{nh}\right)\rho(j)+o_p(r_n).
\end{equation}
So, taking the quantity in the expression (22) of
\citet{altman-1990}, we have to evaluate
\[
Q_1=\left|\frac{1}{nh}\sum_{j=-nh/2}^{nh/2}\mathcal{K}\left(\frac{j}{nh}\right)\rho(j)-\frac{1}{nh}\sum_{j=-M}^M
\mathcal{K}\left(\frac{j}{nh}\right)\hat{\rho}(j)\right|
\]
where we consider the nearest integer in place of $nh$. By \eqref{eq:proof_1_3},
it follows that 
\[
Q_1=\left|\frac{2}{nh}\sum_{j=M+1}^{nh/2}\mathcal{K}\left(\frac{j}{nh}\right)\rho(j)\right|+o_p(r_n).
\]
Since, by assumptions \ref{ass:eps}, \ref{ass:K} and 
\ref{ass:M},
\[
\frac{1}{nh}\sum_{j=M+1}^{nh/2}\mathcal{K}\left(\frac{j}{nh}\right)\rho(j)\sim\rho(nh)=o(r_n),
\]
then $Q_1=o_p(r_n)$. Notice that the first term in $Q_1$ is the analog of
expression (22) in \citet{altman-1990}. Therefore, $\text{CV}(h)$ can now be
written as
\begin{align*}
\text{CV}(h) &= %
\quadre{1-\frac{1}{nh}\sum_{j=-M}^M\mathcal{K}\left(\frac{j}{nh}\right)\hat{\rho}(j)}
^{-2}  \frac{1}{n} \sum_{i=1}^n \hat{\eps}_i^2\\
{} &=\quadre{1-\frac{1}{nh}\sum_{j=-nh/2}^{nh/2}\mathcal{K}\left(\frac{j}{nh}\right)\rho(j)} ^{-2} %
\frac{1}{n} \sum_{i=1}^n \hat{\eps}_i^2+o_p(r_n).
\end{align*}
Apply the classical bias correction and based on (14) in \citet{altman-1990}, we
have that
\[
\text{CV}(h)=\frac{1}{n}\sum_{t=1}^n\varepsilon_t^2+\text{MISE}(h;\hat{s})+
o_p(r_n),
\]
using the same arguments as in the proof of Theorem 1 in
\citet{chu-marron-1991}. Since $\text{MISE}(h;\hat s)=O(r_n)$, it follows that $\hat{h}$, the
minimizer of $\text{CV}(h)$, is equal to $h^\star$, the minimizer
of $\text{MISE}(h;\hat{s})$, asymptotically in probability. \qed \\

\noindent Before proving Proposition  \ref{prop:2}, we need a technical Lemma
that states the consistency of the  classical subsampling  (not random)  in the
case of the estimator $\hat s(t)$ on the entire sample.   
\begin{lemma}
\label{lemma1} Assume \ref{ass:s}, \ref{ass:eps}, \ref{ass:K} and \ref{ass:M}.  Let $\hat s(t)$ be the estimator of $s(t)$ computed on the entire sample (of length n). If $b/n\conv 0$ whenever $n\conv \infty$and $b\conv \infty$, then 
\begin{eqnarray}
(i)&\sup_x\left|\hat G_{n,b}(x)-G(x)\right|\stackrel{p}{\longrightarrow} 0 \nonumber \\
(ii) & \hat q_{n,b}(\gamma)\stackrel{p}{\longrightarrow}q(\gamma) \nonumber
\end{eqnarray}
where $\gamma\in(0,1)$ and $\hat G_{n,b}(x)$, $ G(x)$,
$\hat q_{n,b}(\gamma)$ and $q(\gamma)$ are defined in Section \ref{sec:estimation}. 
\end{lemma}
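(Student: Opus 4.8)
}
The plan is to prove (i) first and then read (ii) off from it, and to split (i) into an \emph{oracle} computation — pretending the unobserved $\varepsilon_i$ were available — and a \emph{plug-in} correction that controls the error $\Delta_i:=s(i/n)-\hat s(i/n)$ appearing when $\varepsilon_i$ is replaced by $\hat\varepsilon_i=\varepsilon_i+\Delta_i$. Throughout, write $G_{n,b}$ and $F_{n,b}$ for the subsampling distribution functions built on the \emph{true} errors, i.e.\ with $V_{n,b,t}$ in place of $\hat V_{n,b,t}$.

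\emph{Oracle step.} By \ref{ass:eps} the sequence $\{\varepsilon_t\}$ is strictly stationary and $\alpha$-mixing with $\alpha(k)\to 0$, and (again by \ref{ass:eps}) $\sqrt n(V_n-\sigma^2_\varepsilon)\convd\text{Normal}(0,V)=G$. Hence the classical subsampling consistency theorem for stationary mixing sequences \citep{politis-romano-1994,politis-romano-wolf-2001}, applied to the sample-variance functional with normalising rate $\tau_n=\sqrt n$, gives, under $b\to\infty$ and $b/n\to 0$, that $\sup_x|G_{n,b}(x)-G(x)|\convp 0$ (uniformity because $G$ is continuous); the analogous statement holds for $F_{n,b}$ and its continuous limit $F$. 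Centring the block statistics at $V_n$ rather than at $\sigma^2_\varepsilon$ is immaterial, since $\sqrt b\,(V_n-\sigma^2_\varepsilon)=(\sqrt b/\sqrt n)\,O_p(1)=o_p(1)$.

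\emph{Plug-in step.} Expanding the square, $\hat V_{n,b,t}-V_{n,b,t}=A_t+B_t$, where $B_t=(b-1)^{-1}\sum_{i=t}^{t+b-1}(\Delta_i-\bar\Delta_{b,t})^2\ge 0$, $A_t=2(b-1)^{-1}\sum_{i=t}^{t+b-1}(\varepsilon_i-\bar\varepsilon_{b,t})(\Delta_i-\bar\Delta_{b,t})$ and $\bar\Delta_{b,t}=b^{-1}\sum_{i=t}^{t+b-1}\Delta_i$. I would prove that the average absolute perturbation $D_n:=(n-b+1)^{-1}\sum_{t}\sqrt b\,|\hat V_{n,b,t}-V_{n,b,t}|\convp 0$. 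For the $B$-term, the fact that each index $i$ lies in at most $b$ of the overlapping blocks gives $(n-b+1)^{-1}\sum_t\sqrt b\,B_t\le \sqrt b\,b\,[(b-1)(n-b+1)]^{-1}\sum_{i=1}^n\Delta_i^2=O_p(\sqrt b\,r_n)$, using $n^{-1}\sum_i\Delta_i^2=\text{MISE}(h;\hat s)+o_p(r_n)=O_p(r_n)$ from \eqref{eq:proof_1_1} in the proof of Proposition~\ref{prop:1} and $r_n=n^{-4/5}$; this is $o_p(1)$ because $b/n\to0$. For the $A$-term, Cauchy--Schwarz within each block bounds $\sqrt b\,|A_t|$, up to a constant, by $\sqrt{V_{n,b,t}}\,\big(\sum_{i=t}^{t+b-1}\Delta_i^2\big)^{1/2}$; combined with Cauchy--Schwarz over $t$, the same block-counting bound, and $(n-b+1)^{-1}\sum_t V_{n,b,t}\convp\sigma^2_\varepsilon$, this controls the $A$-contribution by a term of order $\sqrt b\,r_n^{1/2}$. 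To reach the conclusion under the stated hypothesis $b/n\to0$ alone I would instead exploit that, by the compact support of $\mathcal K$ in \ref{ass:K}, $\Delta_i$ depends on $\{\varepsilon_j\}$ only through an $O(nh)$-wide window: then $\sum_{i=t}^{t+b-1}\varepsilon_i\Delta_i$ is essentially a banded quadratic form in the errors with mean $O(b/(nh))$ and standard deviation $O((b/(nh))^{1/2})$, whence $\sqrt b\,A_t=o_p(1)$ pointwise; uniform (or on-average) control over the $O(n)$ blocks then follows from a Markov/union bound built on the $(4+2\delta)$-th moment and the summability of $\alpha^{\delta/(2+\delta)}(k)$ in \ref{ass:eps}.

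\emph{Assembling (i)--(ii), and the main obstacle.} With $D_n\convp 0$ and $\sup_x|G_{n,b}-G|\convp 0$, sandwiching indicators yields, for every $\eta>0$, $G_{n,b}(x-\eta)-D_n/\eta\le\hat G_{n,b}(x)\le G_{n,b}(x+\eta)+D_n/\eta$ uniformly in $x$ (Markov's inequality bounding the fraction of blocks with $\sqrt b\,|\hat V_{n,b,t}-V_{n,b,t}|>\eta$ by $D_n/\eta$); letting $n\to\infty$ and then $\eta\downarrow 0$, and using continuity of $G$, gives (i). The same argument applied to $\hat F_{n,b}$ yields $\sup_x|\hat F_{n,b}(x)-F(x)|\convp 0$, and since $F$ is continuous and strictly increasing near $q(\gamma)$, the standard conversion of uniform distribution-function convergence into quantile convergence gives $\hat q_{n,b}(\gamma)\convp q(\gamma)$, i.e.\ (ii). The main obstacle is precisely the cross term $A_t$: it must be shown negligible after multiplication by the subsampling scale $\sqrt b$, simultaneously over the $O(n)$ overlapping blocks, using only the global rate $n^{-1}\sum_i\Delta_i^2=O_p(n^{-4/5})$ inherited from Proposition~\ref{prop:1} together with the moment and mixing restrictions of \ref{ass:eps}; the naive Cauchy--Schwarz bound loses a factor $\sqrt b$, so recovering the cancellation in $\sum_{i}\varepsilon_i\Delta_i$ via the localised dependence of $\Delta_i$ on the errors, and then applying a maximal inequality over blocks, is the technical heart of the argument.
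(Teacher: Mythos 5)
Your proposal follows essentially the same route as the paper's proof: oracle subsampling consistency for the true-error statistics via the Politis--Romano--Wolf theorem, negligibility of the plug-in perturbation $\sqrt{b}\left(\hat V_{n,b,t}-V_{n,b,t}\right)$ using the rate $r_n=n^{-4/5}$ inherited from Proposition~\ref{prop:1}, a Slutsky-type sandwich of indicators averaged over blocks for part (i), and the standard conversion to quantile convergence for part (ii) (where the paper simply defers to Theorem 5.1 of Politis--Romano--Wolf). The only substantive difference is one of detail: you isolate the cross term $A_t$, observe that naive Cauchy--Schwarz loses a factor $\sqrt{b}$, and sketch a localized quadratic-form argument exploiting the compact support of $\mathcal{K}$, whereas the paper compresses this entire step into the one-line claim that $\sqrt{b}\left(r_n+b^{-1}\right)\to 0$ suffices for $\sqrt{b}\left(\hat V_{n,b,t}-V_{n,b,t}\right)\convp 0$.
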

\begin{proof}
For the part \textit{(i)}, notice that under the assumption \ref{ass:eps}, the conditions of Theorem (4.1) in \cite{politis-romano-wolf-2001} hold. Let us denote  $r_n=\frac{1}{nh}+h^4$, which  is  $r_n=n^{-4/5}$ by Proposition (\ref{prop:1}). As in the proof of Proposition (\ref{prop:1}), we can write 
\begin{displaymath}
\frac{1}{b}\sum_{i=t}^{t+b-1}\left(\hat{\varepsilon}_i-\varepsilon_i\right)^2 = O_p(r_n) + O(1/b) \qquad \text{for all} \; t,
\end{displaymath}
since $\hat s(\cdot)$ is estimated on the entire sample of length $n$ and $\{\varepsilon_i\}$ is  a sequence of stationary random variables by \ref{ass:eps}. The term $O(1/b)$ is due to the error of the deterministic variable in $s(\cdot)$, when we consider the mean instead of the integral.  Note that we do not report this error ($O(1/n)$) in (\ref{eq:proof_1_1}) because the leading term is $r_n$ given that  $r_n^{-1}/n\rightarrow 0$ when $n\rightarrow\infty$.  In this case $b^{-1} \sum_{i=t}^{t+b-1}\left(\hat{\varepsilon}_i-\varepsilon_i\right)^2$ is an estimator of $\text{MISE}_I(h;\hat s)$ on a set $I\subset (0,1)$. Now   $\sqrt{b}(r_n+b^{-1}) \conv  0$ as $n \conv \infty$, and all this is sufficient to have that $\sqrt{b}\left(\hat{V}_{n,b,t}-V_{n,b,t}\right) \convp 0$, for all $t$. Then, we can conclude that $\sqrt{b}\left(\hat{V}_{n,b,t}-V_{n}\right)$ has the same asymptotic distribution as  $\sqrt{b}\left(V_{n,b,t}-V_n\right)$. Let us denote $Z_{1t}=\sqrt{b}\left(V_{n,b,t}-V_n\right)$ and $Z_{2t}=\sqrt{b}\left(\hat{V}_{n,b,t}-V_{n,b,t}\right)$,  hence 
\begin{displaymath}
\hat{G}_{n,b}(x) = \frac{1}{n-b+1}\sum_{t=1}^{n-b+1}   \uno \set{Z_{1t}+Z_{2t} \le x}.
\end{displaymath}
By the same arguments used for the proof of Slutsky theorem the previous equation can be written as
\begin{eqnarray}
\sup_x \abs{\hat{G}_{n,b}(x)-G(x)}   &\le&   \sup_x \abs{ G_{n,b}(x \pm \xi)-G(x)}  + \nonumber\\
                                    &+&     \frac{1}{n-b+1}\sum_{t=1}^{n-b+1} \uno \set{ |Z_{2t}| > \xi}  \nonumber
\end{eqnarray}
for any positive constant $\xi$. Since $G(x)$ is continuous at any $x$ (Normal distribution), it follows that  $\sup_x\abs{ G_{n,b}(x) -G(x)} \convp 0$ by Theorem (4.1) in \cite{politis-romano-wolf-2001}. Moreover, by \ref{ass:eps}  $Z_{2t} \convp 0$, for all $t$ and  thus 
\begin{displaymath}
\frac{1}{n-b+1}\sum_{t=1}^{n-b+1} \uno \set{ |Z_{2t}|> \xi} \convp 0, 
\end{displaymath}
for all $\xi>0$, which proves  the result.  

Finally, part \textit{(ii)} is straightforward following Theorem 5.1 of \cite{politis-romano-wolf-2001} and part \textit{(i)} of this Lemma.
\end{proof}

\paragraph{Proof of Proposition \ref{prop:2}.} Let $\mbox{P}^*(X)$ and $\mbox{E}^*(X)$ be the
conditional probability and the conditional expectation of a random variable $X$
with respect to a set $\chi = \set{Y_1,\ldots,Y_n}$. 
Here $\hat G_{n,b}(x)$ uses the estimator $\hat s_b(\cdot)$  on each subsample
of length $b$. Then, 
\begin{equation*}
\frac{1}{b}\sum_{i=t}^{t+b-1}\left(\hat \varepsilon_i-\varepsilon_i\right)^2=O_p\left(b^{-4/5}\right) \qquad \text{for all} \; t,
\end{equation*}
as in the proofs of Proposition \ref{prop:1} and Lemma \ref{lemma1} (i). Then, $\sqrt{b}b^{-4/5}\conv 0$. So, by Lemma 
\ref{lemma1} (i), it follows that
\[
\sup_x\left|\hat G_{n,b}(x)-G(x)\right|\convp 0,
\]
since $G(x)$ is continuous for all $x$. Let $Z_i(x)=\uno\set{\sqrt{b}\left(\hat
V_{n,b,i}-V_n\right)\le x}$  and $Z_i^*(x)=\uno\set{\sqrt{b}\left(\hat V_{n,b,I_i}-V_n\right)\le
x}$. $I_i$ is a random variable from $I=\set{1,2,\ldots,n-b+1}$.
Then, $\mbox{P}^*(Z_i^*(x)=Z_i(x))=\frac{1}{n-b+1}$ for all $i$ and each
$x$. Then, we can write $\tilde{G}_{n,b}(x)=\frac{1}{K}\sum_{i=1}^KZ_i^*(x)$ and
\begin{displaymath}
\mbox{E}^* \tonde{ \tilde{G}_{n,b}(x)} = \frac{1}{n-b+1}
\sum_{t=1}^{n-b+1}Z_i(x) = \hat{G}_{n,b}(x)\convp G(x).
\end{displaymath}
By Corollary 2.1 in \citet{politis-romano-1994} we have that
\[
\sup_x\left|\tilde G_{n,b}(x)-\hat G_{n,b}(x)\right|\conv 0\quad \mbox{almost sure}.
\]
Then, the result follows. \qed

\paragraph{Proof of Corollary \ref{cor:1}.}
It follows the proof of Lemma \ref{lemma1} (ii) by replacing  Lemma \ref{lemma1} (i)  with  Proposition \ref{prop:2}. \qed


\bibliographystyle{chicago}
\bibliography{0_REF}%

\begin{thebibliography}{}

\bibitem[\protect\citeauthoryear{Altman}{Altman}{1990}]{altman-1990}
Altman, N.~S. (1990).
\newblock Kernel smoothing of data with correlated errors.
\newblock {\em Journal of the American Statistical Association\/}~{\em
  85\/}(411), 749--759.

\bibitem[\protect\citeauthoryear{Altman}{Altman}{1993}]{altman_1993}
Altman, N.~S. (1993).
\newblock Estimating error correlation in nonparametric regression.
\newblock {\em Statistics \& probability letters\/}~{\em 18\/}(3), 213--218.

\bibitem[\protect\citeauthoryear{Ballou}{Ballou}{2005}]{ballou-2005}
Ballou, G. (2005).
\newblock {\em Handbook for Sound Engineers}.
\newblock Focal Press.

\bibitem[\protect\citeauthoryear{Berg, McMurry, and Politis}{Berg
  et~al.}{2012}]{Berg_McMurry_Politis_2012}
Berg, A., T.~McMurry, and D.~N. Politis (2012).
\newblock Testing time series linearity: traditional and bootstrap methods.
\newblock {\em Handbook of Statistics: Time Series Analysis: Methods and
  Applications\/}~{\em 30}, 27.

\bibitem[\protect\citeauthoryear{Boley, Lester, and Danner}{Boley
  et~al.}{2010}]{boley-etal-2010}
Boley, J., M.~Lester, and C.~Danner (2010).
\newblock Measuring dynamics: Comparing and contrasting algorithms for the
  computation of dynamic range.
\newblock In {\em Proceedings of the AES 129th Convention, San Francisco}.

\bibitem[\protect\citeauthoryear{Chu and Marron}{Chu and
  Marron}{1991}]{chu-marron-1991}
Chu, C.~K. and J.~S. Marron (1991).
\newblock Comparison of two bandwidth selectors with dependent errors.
\newblock {\em The Annals of Statistics\/}~{\em 19\/}(4), 1906--1918.

\bibitem[\protect\citeauthoryear{Francisco-Fern{\'a}ndez, Opsomer, and
  Vilar-Fern{\'a}ndez}{Francisco-Fern{\'a}ndez
  et~al.}{2004}]{FranciscoFernandez_etal_2004}
Francisco-Fern{\'a}ndez, M., J.~Opsomer, and J.~M. Vilar-Fern{\'a}ndez (2004).
\newblock Plug-in bandwidth selector for local polynomial regression estimator
  with correlated errors.
\newblock {\em Journal of Nonparametric Statistics\/}~{\em 16\/}(1-2),
  127--151.

\bibitem[\protect\citeauthoryear{Freidlin and Gastwirth}{Freidlin and
  Gastwirth}{2000}]{FreidlinGastwirth2000}
Freidlin, B. and J.~L. Gastwirth (2000).
\newblock Should the median test be retired from general use?
\newblock {\em The American Statistician\/}~{\em 54\/}(3), 161–164.

\bibitem[\protect\citeauthoryear{Hall, Lahiri, and Polzehl}{Hall
  et~al.}{1995}]{Hall-a-etal-1995}
Hall, P., S.~N. Lahiri, and J.~Polzehl (1995).
\newblock On bandwidth choice in nonparametric regression with both short- and
  long-range dependent errors.
\newblock {\em The Annals of Statistics\/}~{\em 23\/}(6), 1921--1936.

\bibitem[\protect\citeauthoryear{Hart}{Hart}{1991}]{Hart_1991}
Hart, J.~D. (1991).
\newblock Kernel regression estimation with time series errors.
\newblock {\em J. Roy. Statist. Soc. Ser. B\/}~{\em 53\/}(1), 173--187.

\bibitem[\protect\citeauthoryear{Katz}{Katz}{2007}]{katz-2007}
Katz, B. (2007).
\newblock {\em Mastering audio: the art and the science}.
\newblock Focal Press.

\bibitem[\protect\citeauthoryear{Mood}{Mood}{1954}]{Mood54}
Mood, A.~M. (1954).
\newblock On the asymptotic efficiency of certain nonparametric two-sample
  tests.
\newblock {\em The Annals of Mathematical Statistics\/}, 514--522.

\bibitem[\protect\citeauthoryear{Politis and Romano}{Politis and
  Romano}{1994}]{politis-romano-1994}
Politis, D.~N. and J.~P. Romano (1994).
\newblock Large sample confidence regions based on subsamples under minimal
  assumptions.
\newblock {\em The Annals of Statistics\/}~{\em 22\/}(4), 2031--2050.

\bibitem[\protect\citeauthoryear{Politis and Romano}{Politis and
  Romano}{2010}]{Politis_Romano_2010}
Politis, D.~N. and J.~P. Romano (2010).
\newblock {K}-sample subsampling in general spaces: The case of independent
  time series.
\newblock {\em Journal of Multivariate Analysis\/}~{\em 101\/}(2), 316--326.

\bibitem[\protect\citeauthoryear{Politis, Romano, and Wolf}{Politis
  et~al.}{2001}]{politis-romano-wolf-2001}
Politis, D.~N., J.~P. Romano, and M.~Wolf (2001).
\newblock {On the asymptotic theory of subsampling}.
\newblock {\em Statistica Sinica\/}~{\em 11\/}(4), 1105--1124.

\bibitem[\protect\citeauthoryear{Priestley and Chao}{Priestley and
  Chao}{1972}]{priestley-chao-1972}
Priestley, M. and M.~Chao (1972).
\newblock Non-parametric function fitting.
\newblock {\em Journal of the Royal Statistical Society. Series B
  (Methodological)\/}~{\em 34\/}(3), 385--392.

\bibitem[\protect\citeauthoryear{Risset and Mathews}{Risset and
  Mathews}{1969}]{risset-mathews-1969}
Risset, J.~C. and M.~V. Mathews (1969).
\newblock Analysis of musical-instrument tones.
\newblock {\em Physics today\/}~{\em 22}, 23.

\bibitem[\protect\citeauthoryear{Serra and Smith}{Serra and
  Smith}{1990}]{Serra_Smith_1990}
Serra, X. and J.~Smith (1990).
\newblock Spectral modeling synthesis: A sound analysis/synthesis system based
  on a deterministic plus stochastic decomposition.
\newblock {\em Computer Music Journal\/}~{\em 14\/}(4), 12.

\bibitem[\protect\citeauthoryear{Vickers}{Vickers}{2010}]{vickers-2010}
Vickers, E. (2010).
\newblock The loudness war: Background, speculation and recommendations.
\newblock In {\em Proceedings of the AES 129th Convention, San Francisco, CA},
  pp.\  4--7.

\bibitem[\protect\citeauthoryear{von Helmholtz}{von
  Helmholtz}{1885}]{vonhelmholtz-1885}
von Helmholtz, H. (1885).
\newblock {\em On the sensations of tone (English translation {AJ} Ellis)}.
\newblock New York: Dover.

\bibitem[\protect\citeauthoryear{Voss}{Voss}{1978}]{Voss_Clarke_1978}
Voss, R.~F. (1978).
\newblock ’’1/f noise’’ in music: Music from 1/f noise.
\newblock {\em The Journal of the Acoustical Society of America\/}~{\em
  63\/}(1), 258.

\bibitem[\protect\citeauthoryear{Voss and Clarke}{Voss and
  Clarke}{1975}]{Voss_Clarke_1975}
Voss, R.~F. and J.~Clarke (1975, Nov).
\newblock “1/fnoise” in music and speech.
\newblock {\em Nature\/}~{\em 258\/}(5533), 317–318.

\bibitem[\protect\citeauthoryear{Xia and Li}{Xia and Li}{2002}]{xia_li_2002}
Xia, Y. and W.~Li (2002).
\newblock Asymptotic behavior of bandwidth selected by the cross-validation
  method for local polynomial fitting.
\newblock {\em Journal of multivariate analysis\/}~{\em 83\/}(2), 265--287.

\end{thebibliography}
\end{document}